\documentclass[hidelinks,onefignum,onetabnum]{siamart251216}
\usepackage{subcaption}

\usepackage{amssymb}
\usepackage{amsfonts}
\usepackage{graphicx}
\usepackage{epstopdf}
\usepackage{enumerate}
\usepackage{wrapfig}
\usepackage[rightcaption]{sidecap}
\usepackage{algorithmic}

\usepackage{hyperref}
\ifpdf
  \DeclareGraphicsExtensions{.eps,.pdf,.png,.jpg}
\else
  \DeclareGraphicsExtensions{.eps}
\fi
\usepackage[normalem]{ulem}

\newsiamremark{remark}{Remark}
\newsiamremark{hypothesis}{Hypothesis}
\crefname{hypothesis}{Hypothesis}{Hypotheses}
\newsiamthm{claim}{Claim}
\newsiamremark{fact}{Fact}
\newsiamremark{ex}{Example}
\crefname{fact}{Fact}{Facts}

\headers{Constrained Approximation for Quantum Signal Processing}{Y. Dong, J. B. Larsen, L. Lin, and R. Sarkar}

\title{Constrained Minimax Approximation for Quantum Signal Processing
\thanks{\textbf{Funding:} J.L. acknowledges support from the Department of Energy
Computational Science Graduate Fellowship under Award Number DE-SC0024386. R.S. and L.L. acknowledge support from the U.S. Department of Energy, Office of Science, Accelerated Research in Quantum Computing Centers, Quantum Utility through Advanced Computational Quantum Algorithms, grant no. DE-SC0025572. L.L. is a Simons Investigator in Mathematics.}
}

\author{Yulong Dong\thanks{Department of Electrical Engineering and Computer Science, University of Michigan, Ann Arbor, MI
  (\email{dongyl@umich.edu}).}
\and James B. Larsen\thanks{Department of Mathematics, University of Michigan, Ann Arbor, MI 
  (\email{jblarsen@umich.edu}).}
\and Lin Lin\thanks{Department of Mathematics, University of California, Berkeley, CA 
   (\email{linlin@math.berkeley.edu}).}
\and Rahul Sarkar\thanks{Department of Mathematics, University of California, Berkeley, CA 
  (\email{rsarkar@berkeley.edu}). Corresponding author.}
}

\usepackage{amsopn}

\newcommand{\RR}{\mathbb{R}}
\newcommand{\CC}{\mathbb{C}}
\newcommand{\DD}{\mathbb{D}}
\newcommand{\TT}{\mathbb{T}}

\newcommand{\calZ}{\mathcal{Z}}

\newcommand{\bc}{\mathbf{c}}
\newcommand{\bu}{\mathbf{u}}
\newcommand{\bv}{\mathbf{v}}
\newcommand{\bga}{\boldsymbol{\gamma}}

\newcommand{\hatX}{\hat{X}}
\newcommand{\hatY}{\hat{Y}}
\newcommand{\hatZ}{\hat{Z}}

\newcommand{\hzeta}{\hat{\zeta}}
\newcommand{\hbc}{\hat{\mathbf{c}}}
\newcommand{\he}{\hat{e}}

\newcommand{\inner}[1]{\left \langle #1, \Phi \right \rangle}
\newcommand{\innerx}[1]{\left \langle #1, \Phi(x) \right \rangle}

\newcommand{\norm}[2]{\left\lVert#1\right\rVert_{#2}}

\ifpdf
\hypersetup{
  pdftitle={Constrained Minimax Approximation for Quantum Signal Processing},
  pdfauthor={Yulong Dong, James B. Larsen, Lin Lin, and Rahul Sarkar}
}
\fi

\begin{document}

\maketitle

\begin{abstract}
Quantum signal processing (QSP) provides a simple and efficient framework for implementing polynomial transformations using quantum circuits. Its classical design stage leads to a constrained minimax approximation problem: find a polynomial of prescribed parity that approximates a target function uniformly on a fitting set while remaining bounded in magnitude by one on the domain $[0,1]$, which can be viewed as a semi-infinite constraint. Discretization converts the problem into a linear program, but feasibility at a set of finitely many sampled points does not ensure feasibility on the whole domain, especially when an optimal approximant reaches the boundary of the feasible set. We investigate two approaches to address this difficulty. A Remez exchange method combined with active-set constraint enforcement is efficient on many tested instances, but its stability depends on the target and problem geometry. We then introduce nonlinear Fourier retraction, which uses QSP completion and phase synthesis to turn a nearly feasible polynomial into phase factors for a feasible QSP polynomial without increasing the degree. Across representative problems, retraction largely preserves approximation accuracy and remains effective on instances where the Remez heuristic is unstable. The resulting workflow connects classical minimax approximation and semi-infinite optimization with nonlinear Fourier analysis, and is implemented in the \texttt{qsppack} software package.
\end{abstract}

\begin{keywords}
Constrained Chebyshev approximation, Remez-type methods, Quantum signal processing, Nonlinear Fourier transform.
\end{keywords}

\begin{MSCcodes}
41A50, 65Y20, 65D15, 68Q25, 81P68.
\end{MSCcodes}

\section{Introduction}
\label{sec:intro}

We address a polynomial approximation problem that arises in the design of several quantum algorithms, using the quantum signal processing (QSP) and quantum singular value transform (QSVT) protocols. For an integer $d \ge 1$, the optimization problem takes the following form of a \textit{constrained Chebyshev approximation} problem:
\vspace*{-0.3cm}
\begin{equation}
\label{eq:intro-opt-prob}
\begin{split}
\min_{\bc \in \RR^{\tilde{d}}} & \quad \sup_{x \in Y} \left \lvert \langle \bc , \Phi(x) \rangle - g(x) \right \rvert \\
\text{s.t.} & \quad -1 \le \langle \bc , \Phi(x) \rangle \le 1, \quad \forall \; x \in X,
\vspace*{-0.3cm}
\end{split}
\end{equation}
where $\tilde{d} := \lceil \frac{d+1}{2} \rceil$, $X := [0,1]$, $Y \subseteq X$ and compact, $g \in C(Y)$ is called the \textit{target function} with $C(Y)$ denoting the space of continuous functions on $Y$, and $\Phi(x) := (\phi_1(x), \dots, \phi_{\tilde{d}}(x))$. The quantities $\phi_1, \dots, \phi_{\tilde{d}}$ appearing in \cref{eq:intro-opt-prob} are polynomials of parity $d$, defined as $\phi_j := T_{2j-2}$ if $d$ is even, and $\phi_j := T_{2j-1}$ if $d$ is odd, where $T_j$ is the $j^{\text{th}}$ Chebyshev polynomial of the first kind \cite{rivlin2020chebyshev}, and the inner-product notation denotes the linear combination $\langle \bc, \Phi(x) \rangle := \sum_{j=1}^{\tilde{d}} c_j \phi_j(x)$, where $\bc := (c_1,\dots,c_{\tilde{d}})$. With these choices, the highest possible degree of the polynomial $\langle \bc , \Phi \rangle$ is $d$, which is also the degree of $\phi_{\tilde{d}}$. The constraints appearing in \cref{eq:intro-opt-prob} are called \textit{bound constraints}, and the compact subset $Y$ will be called the \textit{fitting set}. In words, the optimization problem \cref{eq:intro-opt-prob} seeks an optimal approximating polynomial of maximum degree and parity $d$, bounded in absolute value by one on $[0,1]$, such that it achieves the minimum error to $g$ in the supremum norm over the fitting set $Y$.

In many problems of practical interest in QSP, one is given a set $Y$ that takes the form of a closed interval or a disjoint union of a finite number of closed intervals, and a target function $g \in C(Y)$, and one needs to first find $P' \in \CC[x]$ such that its imaginary part $P := \Im(P')$ approximates $g$ on $Y$ (this task is the focus of this paper), followed by finding a set of phase factors $\Psi:= \left \{\psi_k \in \left(-\frac{\pi}{2},\frac{\pi}{2} \right) : k=0,1,\dots,d \right\}$ that gives rise to $P'$, according to the following $\textrm{SU}(2)$ valued QSP protocol\footnote{The QSP protocol convention used here is called the \textit{Chebyshev QSP} protocol. Two other QSP protocols, the \textit{Analytic QSP} and \textit{Laurent QSP} protocols, exist in literature, and the relationships between them can be found in \cite{laneve2025generalized}.}:
\begin{equation}
\label{eq:qsp-protocol}
U_d(\Psi,x)=
\begin{pmatrix}
    P'(x) & i Q'(x) \sqrt{1-x^2}\\
    i \overline{Q'}(x) \sqrt{1-x^2} & \overline{P'}(x)
\end{pmatrix} := e^{i \psi_0 Z} \prod_{k=1}^{d} \left( \begin{smallmatrix}
    x & i \sqrt{1-x^2} \\
    i \sqrt{1-x^2} & x
\end{smallmatrix} \right) e^{i \psi_k Z},
\vspace*{-0.3cm}
\end{equation}
defined for all $x \in [0,1]$, where $Z = \left( \begin{smallmatrix}
    1 & 0 \\
    0 & -1
\end{smallmatrix} \right)$, and $P',Q' \in \CC[x]$ have degrees at most $d$ and $d-1$, with parities $d \bmod 2$ and $(d-1) \bmod 2$ respectively~\cite[Theorem~4, Corollary~5]{Gily_n_2019}. For example, the matrix inversion problem \cite[Section~VI.B]{Martyn2021} requires a polynomial approximation to $g(x) = \frac{1}{\beta \kappa x}$ on $Y =\left[1/\kappa, 1\right]$, for some $\beta, \kappa \ge 1$. For this problem, an analytical form for the optimal approximating polynomial can be found in \cite{Berntson2024,privalov2007}, when no bound constraints are present. Usually for most other QSP problems, such closed form expressions are not known, even in the unconstrained case. An example is the uniform singular value amplification task \cite[Theorem~30]{Gily_n_2019}, useful for accelerating Hamiltonian simulation \cite{low2017hamiltoniansimulationuniformspectral}, where we have $Y = [0, \Gamma]$ for some $\Gamma \in (0,1)$, and
\vspace*{-0.3cm}
\begin{equation}
\label{eq:uniformamp}
    g(x) = \frac{x}{\beta \Gamma}, \quad x\in [0, \Gamma], \quad \beta \ge 1.
\vspace*{-0.3cm}
\end{equation}
One needs computational techniques to find an optimal approximating polynomial for this problem, even in the unconstrained case, either based on discretization, or based on Remez-type methods (see \cite{cheney1966introduction,meinardus2012approximation,remes1934procede,remez1934determination,remez1934calcul}). For both these examples, we have $\sup_{x \in Y} |g(x)| \le \frac{1}{\beta}$, and when $\beta$ and $d$ are both large, the solution to the unconstrained problem is also a solution to the constrained problem \cref{eq:intro-opt-prob}\footnote{This can be argued from the continuity of the best approximation to the unconstrained problem; see for example \cite[Theorem~3-4]{rice1964approximation}.}. Therefore, having either an analytical formula as in the case of the matrix inversion problem, or a computational method as in the case of uniform singular value amplification task, suffices to solve the original problem~\cref{eq:intro-opt-prob}. This situation is also robust to numerical inaccuracies arising from finite precision and termination conditions of numerical algorithms, as the candidate solutions do not violate the constraints in \cref{eq:intro-opt-prob}, even after accounting for numerical errors. 

However, as $\beta$ converges to one in the above examples, we reach the fully-coherent regime (defined in \Cref{ssec:best-approx}) for these problems, where any optimal polynomial approximation to \cref{eq:intro-opt-prob} approaches the boundary of the feasible set of~\cref{eq:intro-opt-prob}. In the fully-coherent regime, unconstrained optimization methods are no longer effective, and one needs to use methods that enforce the bound constraints. Such methods existing in the literature are again based on discretization (e.g. \cite{reemtsen1987discretization,reemtsen1989note}), Remez-type ideas (e.g. \cite{Chalmers1976,reemtsen1990modifications,Lai2003}), or a combination of the two within a semi-infinite programming framework \cite{hettich1986implementation}. These methods are iterative, and converge under certain conditions to an optimal approximating polynomial (though not in a finite number of iterations in general); but none of them guarantee that the successive iterates stay in the feasible set of~\cref{eq:intro-opt-prob}. When using discretization, this has already been observed for typical QSP problems of interest. A less investigated aspect is the performance of Remez-type methods for QSP problems in this regime, particularly regarding the feasibility of the iterates and issues of numerical stability, and in this work, we shed more light on this topic. Irrespective of which type of method is used to solve~\cref{eq:intro-opt-prob}, to be of further use in downstream QSP workflows, the resulting approximating polynomial obtained numerically must be corrected for residual infeasibility, necessitating good algorithms for this post-processing task.

\subsection{Related works}
\label{ssec:related-works}

The problem of polynomial approximation over Haar subspaces (see \cite{Chalmers1976} for a basis-independent definition), in the setting of $Y=X$, and with bound constraints was first considered in a series of works \cite{Taylor1968,schumaker1969approximations,taylor1969approximation}, where existence, uniqueness and equioscillation-type characterizations of the optimal approximating polynomial were derived, and an algorithm based on \cite{Moursund1968} to find the optimal approximating polynomial was presented. For the same exact setting, Remez-type algorithms were subsequently presented in \cite{Taylor1970,Gimlin1974}, and later the Haar subspace condition for these methods was relaxed in \cite{Watson1974,Chalmers1976}. The additional requirement $Y=X$ was removed in the work of \cite{reemtsen1990modifications}, where a modified first Remez algorithm was presented; however, the resulting algorithm required the solution of a constrained Chebyshev approximation problem on a finite grid in every iteration. Some Remez-type methods for this problem, relying on the Haar subspace condition, were also proposed in works related to the design of finite impulse response filters --- for example, \cite{Grenez1983} suggests a Remez-type method in the setting $Y=X$, while the works \cite{Lai2003,Lai2004} develop a Remez-exchange algorithm for the case when the fitting set $Y$ and the set where the bound constraints are imposed, are disjoint. Algorithms based on discretization of the fitting set~\cite{reemtsen1987discretization,reemtsen1989note}, or additionally the constraint set $X$, within the framework of linear semi-infinite programming (see \cite{Hettich1993, Reemtsen1998,Goberna2018,Potchinkov1997}), are also applicable to our setting and do not require the Haar subspace condition or $Y=X$ to be true.

Within the context of finding optimal approximating polynomials for QSP applications, Remez-type algorithms were suggested in \cite{Low_2016,Martyn2021}, but explicitly checking for and enforcing the feasibility of the obtained solutions were not discussed. An implementation of a Remez-type method was also presented in \cite[Appendix~E]{Dong2021}, but none of these prior works explore the performance of the algorithms in the fully-coherent regime. Optimal unconstrained  approximating polynomials for the specific matrix inversion problem (in the $d$ odd case) were considered in \cite{sunderhauf2025matrixinversionpolynomialsquantum}, where an explicit constraint enforcement procedure from \cite{Martyn2021} based on windowing is recommended. In practice, and for several problems of interest in QSP, one has good constructive approximations for polynomials satisfying the bound constraints in \cref{eq:intro-opt-prob} and achieving a target level of accuracy (typically in supremum norm), and a plethora of such results can be found in \cite{Gily_n_2019} for commonly encountered functions in QSP such as the sign, rectangle, and windowing functions, exponentials, $1/x$, trigonometric functions, and piecewise smooth functions (see also \cite{Martyn2021,low2017hamiltoniansimulationuniformspectral} for more such examples). However, these constructive approximations are not optimal.

\subsection{Contributions}
\label{ssec:contrib}
In this paper, we introduce \Cref{alg:constrained_remez_qsp} that adapts a Remez algorithm with an active-set method to handle feasibility outside the fitting set. Using several numerical experiments, we demonstrate in \Cref{ssec:remez-numerical-dems} practical performance for problems of interest and highlight a specific instance when this algorithm encounters numerical instabilities.
To address this numerically unstable instance, we introduce the nonlinear Fourier retraction algorithm (\Cref{alg:retraction}) based on the connection between QSP and nonlinear Fourier analysis. This algorithm takes as input an infeasible approximate solution to \cref{eq:intro-opt-prob} and produces a set of QSP phase factors that correspond to a feasible approximate solution. We repeat several of the experiments from \Cref{ssec:remez-numerical-dems} in \Cref{ssec:numerical-examples} to highlight the efficacy of this second algorithm.

\subsection{Notation}
\label{ssec:notation}
We introduce some common notation that will be used throughout. Matrices will always be real, except in \Cref{sec:nonlinear-retraction} where they can be complex. For a real matrix $A$, we denote its transpose as $A^\top$, while for a complex matrix $A$, we denote its Hermitian conjugate as $A^\ast$. For a vector $\bv \in \RR^n$ (or $\CC^n$), we denote its components as $v_1,\dots,v_n$ (i.e. $\bv = (v_1,\dots,v_n)$), and define $\norm{\bv}{\infty} := \max \{ |v_1|, \dots, |v_n| \}$. All row and column vectors will always be denoted by boldface letters. The natural logarithm will be denoted as log. For a complex number $z$, its real and imaginary parts will be denoted $\Re(z)$ and $\Im(z)$ respectively.

Polynomials, except possibly in \Cref{ssec:qsp-nlft} and \Cref{sec:nonlinear-retraction}, will always be real-valued polynomials. For a compact subset $Y$, if $g \in C(Y)$, we define $\norm{g}{Y} := \sup_{x \in Y} |g(x)|$. In \Cref{ssec:qsp-nlft} and \Cref{sec:nonlinear-retraction}, we denote the unit circle as $\TT$ and the open unit disk as $\DD$. For a Laurent polynomial $a(z)$, define $a^*(z) := \overline{a(\overline{z}^{-1})}$ for $z\in\CC \setminus \{0\}$, where the $\overline{\cdot}$ means the complex conjugate. Laurent polynomials will always be defined for a single variable over the field $\CC$, and have finite degree. For $p \ge 1$, the space of measurable functions on $\TT$, such that $\int_{\TT} |f(e^{i\theta})|^p \; d\theta < \infty$, is denoted $L^p(\TT)$. For a continuous function $f$ on $\TT$, we denote its supremum norm as $\norm{f}{L^\infty(\TT)}:= \sup_{z \in \TT} |f(z)|$. Products of matrices $\prod_{j=1}^{n} A_j$ will always be ordered from left to right. Other notations will be introduced along the way in the subsequent sections, when necessary.

\subsection*{Code availability}
The algorithms presented in this paper have been implemented in the Python version of \texttt{qsppack}. Full documentation with example usage can be found at \href{https://qsppack.readthedocs.io}{qsppack.readthedocs.io}.

\subsection*{On LLM use}
Large language models (LLMs) were used in writing some of the code for performing the numerical experiments, but all such code was manually reviewed for correctness. LLMs were also consulted asynchronously to identify typos, errors, or redundancies in the manuscript; and any suggestions were reviewed and implemented serially and by-hand if correct.

\section{Preliminaries}
\label{sec:prelim}

In this section, we first introduce some terminology characterizing the optimal solution to \cref{eq:intro-opt-prob} in \Cref{ssec:best-approx}. In \Cref{ssec:finite-discretization}, we describe discretization-based linear programming strategies to approximately solve \cref{eq:intro-opt-prob}. Next, in \Cref{ssec:qsp-nlft}, we discuss the connection between QSP and nonlinear Fourier analysis and give context as to why we are interested in solving the optimization problem \cref{eq:intro-opt-prob}. We then briefly state the connections between QSP. In \Cref{ssec:expt}, we provide the relevant QSP applications that will serve as the numerical experiments.

\subsection{Best approximation}
\label{ssec:best-approx}

Let $\calZ : \RR^{\tilde{d}} \ni \bc \mapsto \langle \bc, \Phi\rangle \in \RR[x]$ be the linear map that maps the unknowns of the optimization problem \cref{eq:intro-opt-prob} to a subspace of polynomials of dimension $\tilde{d}$. The set of feasible solutions to \cref{eq:intro-opt-prob} is denoted as
\vspace*{-0.1cm}
\begin{equation}
\label{eq:feasible-sol}
F := \{\bc \in \RR^{\tilde{d}} : -1 \le \langle \bc , \Phi(x) \rangle \le 1, \quad \forall \; x \in X\},
\end{equation}
and we note that $F$ is non-empty, as $0 \in F$, and it has a non-empty interior. It is also clear that $F$ is a closed, convex subset\footnote{Convexity of $F$ follows from linearity of the map $\calZ$. To see that $F$ is closed in $\RR^{\tilde{d}}$, consider a convergent sequence $\bc_j \rightarrow \bc$ in $\RR^{\tilde{d}}$, such that each $\bc_j \in F$. Then we have by continuity of $\calZ$ that $\norm{\inner{\bc}}{X} \le 1$, and therefore $\bc \in F$.} of $\RR^{\tilde{d}}$. The boundary of $F$ will be denoted $\partial F$. The image of $F$ under the map $\calZ$, i.e. $\calZ(F)$ will be called the set of \textit{feasible polynomials}, while we say that a polynomial $P$ is an \textit{infeasible polynomial} if $P \not \in \calZ(F)$. Note that the specific choice of $\phi_1,\dots,\phi_{\tilde{d}}$ in \cref{eq:intro-opt-prob} is for convenience. One could make other choices too: for example, when $d$ is odd, we have $\text{span} \{\phi_1,\dots,\phi_{\tilde{d}}\} = \text{span} \{x, x^3,\dots,x^d\}$, while for $d$ even, we have $\text{span} \{\phi_1,\dots,\phi_{\tilde{d}}\} = \text{span} \{1, x^2,\dots,x^d\}$, and one could have also chosen $\Phi(x) := (x, x^3, \dots, x^d)$ and $\Phi(x) := (1, x^2, \dots, x^d)$, in the odd and even cases respectively. 

We will say that $\bc^\ast \in F$ is a \textit{best approximation} to $g \in C(Y)$, or interchangeably an \textit{optimal solution} to the optimization problem \cref{eq:intro-opt-prob}, if and only if
\vspace*{-0.1cm}
\begin{equation}
\label{eq:best-approx-def}
\norm{\langle \bc^\ast , \Phi \rangle - g}{Y} \le \norm{\langle \bu , \Phi\rangle - g}{Y}, \quad \forall \; \bu \in F,
\vspace*{-0.1cm}
\end{equation}
and we define the quantity $e^\ast := \norm{\langle \bc^\ast , \Phi \rangle - g}{Y}$ to be the \textit{best error}. The corresponding polynomial $\calZ(\bc^\ast)$ will be called a \textit{best approximation polynomial}. The existence of an optimal solution $\bc^\ast$ is guaranteed by standard compactness arguments, and directly follows from the existence of an optimal solution to problem~\cref{eq:lsip-prob} introduced below, which is equivalent to \cref{eq:intro-opt-prob} when one sets $\hatX = X$ and $\hatY = Y$.

A best approximation polynomial is not necessarily unique. In fact, there can be infinitely many best approximation polynomials for the optimization problem \cref{eq:intro-opt-prob}, and it depends in general on all the problem parameters involved, such as the fitting set $Y$, the target function $g$, as well as the maximum degree of the approximating polynomial $d$. We provide some examples in \Cref{appssec:additional}. Finally, an instance of problem~\cref{eq:intro-opt-prob} is said to belong to the \textit{fully-coherent} regime if it has a best approximation $\bc^\ast \in \partial F$, or equivalently if $\norm{\inner{\bc^\ast}}{X} = 1$. Usually this happens when the target function satisfies $\norm{g}{Y} = 1$.

\subsection{Discretization strategies}
\label{ssec:finite-discretization}

Let us first introduce closed subsets $\hatX \subseteq X$ and $\hatY \subseteq Y$, and consider the following \textit{linear semi-infinite program} (LSIP):
\vspace*{-0.2cm}
\begin{equation}
\label{eq:lsip-prob}
\begin{split}
\min_{(\bc,e) \in \RR^{\tilde{d}+1}} & \quad \quad e \\
\text{s.t.}  & \quad -e \le \langle \bc , \Phi(x) \rangle - g(x )\le e, \quad \forall \; x \in \hatY, \\
& \quad -1 \le \langle \bc , \Phi(x) \rangle \le 1, \quad \forall \; x \in \hatX,
\vspace*{-0.3cm}
\end{split}
\end{equation}
which is exactly equivalent to the optimization problem of minimizing $\norm{\inner{\bc} - g}{\hatY}$, subject to the constraints $\norm{\inner{\bc}}{\hatX} \le 1$. Thus when $\hatX = X$ and $\hatY = Y$, the optimization problem~\cref{eq:intro-opt-prob} and the LSIP~\cref{eq:lsip-prob} are equivalent. We will denote the feasible set for \cref{eq:lsip-prob} as $F_{(\hatX, \hatY)}$, and denote an optimal solution to it as $\hzeta^\ast := (\hbc^\ast, \he^\ast)$. An optimal solution to \cref{eq:lsip-prob} always exists, a fact we record in the next lemma, and whose proof is delegated to \Cref{appssec:additional}. Note that the feasible set $F_{(\hatX,\hatY)}$ is not empty, as $(0, \norm{g}{\hatY}) \in F_{(\hatX,\hatY)}$.

\begin{lemma}[Existence]
\label{lem:existence-optimal-sol}
The LSIP \cref{eq:lsip-prob} always has an optimal solution $\hzeta^\ast$.
\end{lemma}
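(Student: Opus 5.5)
The plan is a Weierstrass-type compactness argument on a bounded sublevel set of the feasible region. The objective $(\bc,e)\mapsto e$ is continuous. The feasible set $F_{(\hatX,\hatY)}$ is an intersection of closed half-spaces in $\RR^{\tilde d+1}$, one for each $x \in \hatX$ and each $x \in \hatY$, so it is closed. Since $(0,\norm{g}{\hatY}) \in F_{(\hatX,\hatY)}$, the optimal value is at most $e_0 := \norm{g}{\hatY}$. It is also bounded below by $0$, assuming $\hatY \neq \emptyset$; if $\hatY=\emptyset$, the value is $-\infty$, and the statement implicitly assumes $\hatY$ is nonempty. It therefore suffices to minimize $e$ over
\[
K := F_{(\hatX,\hatY)} \cap \{(\bc,e) : 0 \le e \le e_0\}.
\]
This set is nonempty and closed, so the only remaining point is that $K$ is bounded, i.e.\ that the $\bc$-components of points in $K$ stay bounded.

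Boundedness of $\bc$ is the main obstacle, because $\hatX$ and $\hatY$ are arbitrary closed sets and might be small. For $(\bc,e)\in K$ we have $\norm{\inner{\bc}}{\hatX}\le 1$ and $\norm{\inner{\bc}}{\hatY}\le \norm{g}{\hatY}+e_0 = 2e_0$. Set $S := \hatX \cup \hatY$, so that $\norm{\inner{\bc}}{S} \le \max(1,2e_0)$. If $S$ contains at least $\tilde d$ points with distinct absolute values that are nonzero (nonzero only needed in the odd case), then the map $\bc \mapsto \norm{\inner{\bc}}{S}$ is a norm on $\RR^{\tilde d}$. This holds because the span of $\Phi$ is a Haar space on $[0,1]$: a nonzero polynomial of parity $d$ in this span has at most $\tilde d-1$ roots in $(0,1]$, and in the even case at most $\tilde d-1$ roots in $[0,1]$. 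By equivalence of norms in finite dimension, $\bc$ is then bounded, $K$ is compact, and the continuous function $e$ attains its minimum on $K$. That minimizer is the desired $\hzeta^\ast$.

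If $S$ is finite with too few points, then $\bc$ genuinely need not be bounded, so the lemma needs a separate argument. Here I would reduce to a quotient space. Let $N := \{\bc : \inner{\bc}(x)=0 \ \forall x\in S\}$, a linear subspace. Both the constraints and the objective depend on $\bc$ only through its class in $\RR^{\tilde d}/N$, or equivalently through its projection onto $N^\perp$. On $N^\perp$, the map $\bc\mapsto\norm{\inner{\bc}}{S}$ is a norm, so the compactness argument above applies with $\bc$ restricted to $N^\perp$. Any minimizer found there is a minimizer of the full problem, since adding an element of $N$ changes neither the feasibility nor the value of $e$. This quotient argument handles the general case uniformly, so I would present it directly, without the Haar-space case split.
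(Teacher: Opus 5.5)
Your argument is correct, and it is organized differently from the paper's. The paper splits on the size of $\hatZ$, which is $\hatX\cup\hatY$ with $0$ removed when $d$ is odd.

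In the case $|\hatZ|>\tilde{d}$, the paper uses the Chebyshev-set property to get $\mu=\inf_{\norm{\bc}{\infty}=1}\norm{\inner{\bc}}{\hatZ}>0$. It then shows that any $\bc$ outside a ball $B_R$ is either infeasible or does worse than $\bc=0$, so the infimum is attained on the compact set $\tilde{F}\cap B_R$.

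In the case $|\hatZ|\le\tilde{d}$, the paper writes down an optimal solution explicitly. It interpolates on $\hatZ$ the function equal to $0$ off $\hatY$, to $g$ clipped to $[-1,1]$ on $\hatY\cap\hatX$, and to $g$ on $\hatY\setminus\hatX$.

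Your Haar-space case is essentially the paper's case (i), phrased through equivalence of norms. Your quotient step, projecting onto $N^\perp$ where $\bc\mapsto\norm{\inner{\bc}}{S}$ has trivial kernel, replaces both the case split and the interpolation construction with a single compactness argument. It also contains the first case as the instance $N=\{0\}$.

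What your route buys is generality. It never uses the Haar/Chebyshev structure of $\Phi$, and it works verbatim for any finite-dimensional subspace of $C(X)$ and any closed $\hatX,\hatY$. What the paper's route buys is a closed-form optimizer in the degenerate case.

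Your remark that $\hatY\neq\emptyset$ is needed is also right. Without it, $e$ is unbounded below and no optimal solution exists. The paper relies on this tacitly, for example when it exhibits $(0,\norm{g}{\hatY})$ as a feasible point.
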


We give two additional properties of the LSIP \cref{eq:lsip-prob} as lemmas below. The proofs are obvious and skipped.

\begin{lemma}[Non-decreasing best error]
\label{lem:lsip-increasing}
Let $\hatX_1, \hatX_2$, $\hatY_1$, $\hatY_2$ be compact subsets such that $\hatX_1 \subseteq \hatX_2 \subseteq X$ and $\hatY_1 \subseteq \hatY_2 \subseteq Y$. Now consider two cases of the LSIP \cref{eq:lsip-prob} with (a) $(\hatX,\hatY) = (\hatX_1,\hatY_1)$, and (b) $(\hatX,\hatY) = (\hatX_2,\hatY_2)$, and suppose $\hzeta^\ast_1 = (\hbc^\ast_1, \he^\ast_1)$ and $\hzeta^\ast_2 = (\hbc^\ast_2, \he^\ast_2)$ be an optimal solution for problem (a) and (b) respectively. Then $\he^\ast_1 \le \he^\ast_2 \le e^\ast$.
\end{lemma}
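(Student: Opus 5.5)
The plan is a feasible-set inclusion (monotonicity) argument. First, check that the optimal values are well defined: by \Cref{lem:existence-optimal-sol}, each of problems (a) and (b) has an optimal solution. By the equivalence noted after \cref{eq:lsip-prob}, the optimal value of (a) is
\begin{equation*}
\he^\ast_1 = \min\left\{ \norm{\inner{\bc} - g}{\hatY_1} : \bc \in \RR^{\tilde{d}},\ \norm{\inner{\bc}}{\hatX_1} \le 1 \right\},
\end{equation*}
and the same holds for $\he^\ast_2$ with $(\hatX_2,\hatY_2)$. Since the optimal value is unique, the particular choice of optimal solutions $\hzeta^\ast_1,\hzeta^\ast_2$ does not matter.

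For the inequality $\he^\ast_1 \le \he^\ast_2$, take the optimal pair $(\hbc^\ast_2,\he^\ast_2)$ of problem (b) and show that it is feasible for problem (a). Because $\hatX_1 \subseteq \hatX_2$, the bound $|\langle \hbc^\ast_2, \Phi(x)\rangle| \le 1$ on $\hatX_2$ implies the same bound on $\hatX_1$. Because $\hatY_1 \subseteq \hatY_2$, the error constraint $|\langle \hbc^\ast_2,\Phi(x)\rangle - g(x)| \le \he^\ast_2$ on $\hatY_2$ implies it on $\hatY_1$. Hence $(\hbc^\ast_2,\he^\ast_2) \in F_{(\hatX_1,\hatY_1)}$, and optimality of $\he^\ast_1$ for (a) gives $\he^\ast_1 \le \he^\ast_2$.

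For the inequality $\he^\ast_2 \le e^\ast$, apply the same argument with the original problem playing the role of the larger one. Take a best approximation $\bc^\ast$, which exists as stated in \Cref{ssec:best-approx}. The pair $(\bc^\ast, e^\ast)$ is feasible for the LSIP with $(\hatX,\hatY)=(X,Y)$, which is equivalent to \cref{eq:intro-opt-prob}. Since $\hatX_2 \subseteq X$ and $\hatY_2 \subseteq Y$, restricting the constraints shows $(\bc^\ast,e^\ast) \in F_{(\hatX_2,\hatY_2)}$, so $\he^\ast_2 \le e^\ast$. Equivalently, one can apply the first step to the nested pairs $(\hatX_2,\hatY_2)\subseteq(X,Y)$.

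There is no substantive obstacle. The only point needing care is the well-definedness of the optimal values, which \Cref{lem:existence-optimal-sol} provides.
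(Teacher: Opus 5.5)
Your proof is correct and is the intended argument: the paper omits this proof as obvious, and the inclusion $F_{(\hatX_2,\hatY_2)} \subseteq F_{(\hatX_1,\hatY_1)}$, together with $(\bc^\ast, e^\ast) \in F_{(\hatX_2,\hatY_2)}$, is all that is needed. Invoking \Cref{lem:existence-optimal-sol} is harmless but unnecessary, since the statement already assumes the optimal solutions $\hzeta^\ast_1,\hzeta^\ast_2$ exist.
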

\begin{lemma}[Optimal LSIP -- optimal original]
\label{lem:lsip-to-full-optimality}
Consider the LSIP \cref{eq:lsip-prob} and let $\hzeta^\ast = (\hbc^\ast, \he^\ast)$ be an optimal solution. Then $\hbc^\ast$ is an optimal solution to the optimization problem \cref{eq:intro-opt-prob} with best error $\he^\ast$ if and only if $\hbc^\ast \in F$ and $\norm{\inner{\hbc^\ast} - g}{Y} = \he^\ast$.
\end{lemma}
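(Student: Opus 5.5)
The plan is to prove the two directions separately. Both rest on \Cref{lem:lsip-increasing}, which gives $\he^\ast \le e^\ast$ because $\hatX \subseteq X$ and $\hatY \subseteq Y$.

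\emph{($\Leftarrow$)} Assume $\hbc^\ast \in F$ and $\norm{\inner{\hbc^\ast} - g}{Y} = \he^\ast$. Feasibility for \cref{eq:intro-opt-prob} is exactly $\hbc^\ast \in F$. By the definition of the best error $e^\ast$ in \cref{eq:best-approx-def}, every $\bu \in F$ satisfies $\norm{\inner{\bu}-g}{Y} \ge e^\ast$. Applying this to $\bu = \hbc^\ast$ and combining with \Cref{lem:lsip-increasing} yields
\begin{equation*}
e^\ast \le \norm{\inner{\hbc^\ast} - g}{Y} = \he^\ast \le e^\ast .
\end{equation*}
Hence all three quantities are equal. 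So $\hbc^\ast$ attains the minimum in \cref{eq:best-approx-def} and is optimal, and its best error is $e^\ast = \he^\ast$.

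\emph{($\Rightarrow$)} Assume $\hbc^\ast$ is optimal for \cref{eq:intro-opt-prob} with best error $\he^\ast$. Optimality includes feasibility, so $\hbc^\ast \in F$. The statement that its best error is $\he^\ast$ means, by the definition of $e^\ast$, that $\norm{\inner{\hbc^\ast} - g}{Y} = e^\ast = \he^\ast$. This is exactly the second condition.

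The only delicate point is interpretation. The phrase ``with best error $\he^\ast$'' has to be read as the equality $e^\ast = \he^\ast$, because otherwise the forward direction could fail when $\he^\ast < e^\ast$. With that reading, everything reduces to the sandwich inequality displayed above, and no compactness or existence argument is needed beyond \Cref{lem:existence-optimal-sol}, which ensures that $\hzeta^\ast$ exists.
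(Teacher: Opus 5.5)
Your proof is correct. The paper skips this proof as obvious, and your sandwich $e^\ast \le \norm{\inner{\hbc^\ast} - g}{Y} = \he^\ast \le e^\ast$, obtained from the definition of $e^\ast$ together with \Cref{lem:lsip-increasing}, is exactly the intended argument.

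Your remark that ``with best error $\he^\ast$'' must be read as $e^\ast = \he^\ast$ is also right. Without that reading the forward direction can fail. For example, take $d=1$, $Y=\hatX=X$, $g\equiv 1/2$ and $\hatY=\{1\}$. Then $\hbc^\ast = 1/2$ is optimal for \cref{eq:intro-opt-prob}, but $\he^\ast = 0 < 1/2 = e^\ast$.
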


The optimization problem \cref{eq:lsip-prob} still has an infinite number of constraints when either $\hatX$ or $\hatY$ is infinite. In order to solve it on a computer, one chooses $\hatX$ and $\hatY$ to be finite. This step is well-known in the literature as \textit{discretization}, which converts problem~\cref{eq:lsip-prob} into a linear program (LP). As \Cref{lem:lsip-increasing} shows, the optimal solution to \cref{eq:lsip-prob} is then always a lower bound to the best error $e^\ast$ of the original optimization problem~\cref{eq:intro-opt-prob}, and one expects that as $\hatX$ and $\hatY$ become more dense in $X$ and $Y$ respectively, the optimal solution to \cref{eq:lsip-prob} should converge to $e^\ast$ (see \cite[Section~6]{Shapiro2009}). 

\begin{SCfigure}
\includegraphics[width=0.55\linewidth]{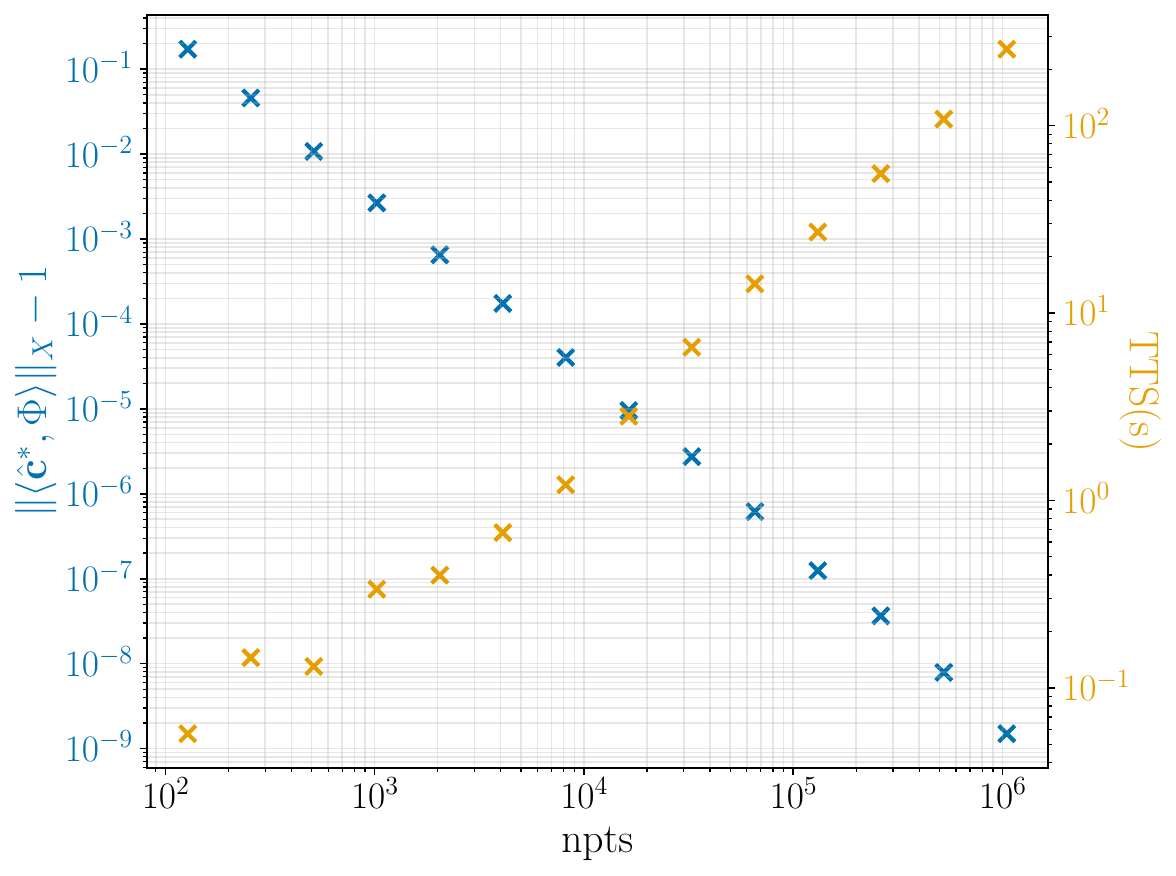}
\caption{Uniform singular value amplification example \cref{eq:uniformamp} with $\beta=1, \Gamma=0.2$, and polynomial degree $d=101$. The plot shows the maximum constraint violation $\norm{\inner{\hbc^\ast}}{X} - 1$ in blue, and time-to-solution (TTS) in maize required by \texttt{cvxpy} (using \texttt{CLARABEL} solver \cite{Clarabel_2024}) to solve the LP in \cref{eq:lsip-prob}, versus the number of discretization points $\texttt{npts} = |\hatX|$.}
\label{fig:constraintviol}
\vspace*{-20pt}
\end{SCfigure}

In prior works, one can also find results (see \cite[Theorem~3.2, Theorem~3.3]{Shapiro2009}, \cite[Theorem~4.1]{borwein1981direct}) which imply the existence of finite sets $\hatX$ and $\hatY$, with $|\hatX| + |\hatY| \le \tilde{d} + 1$, such that the minimum value of the LP \cref{eq:lsip-prob} equals the best error of \cref{eq:intro-opt-prob}. Note that these results are not constructive, and there are two difficulties in applying this existence result in actual practical instances of \cref{eq:intro-opt-prob}: (i) there is no known algorithm to compute the finite sets $\hatX$ and $\hatY$, and (ii) even if one can find the finite sets $\hatX$ and $\hatY$, the theorems do not guarantee that the $\hbc^\ast$ obtained from the optimal solution $\hzeta^\ast = (\hbc^\ast, \he^\ast)$ to the resulting LP, is also a feasible solution of \cref{eq:intro-opt-prob}. In order to get around the first difficulty, as a practical strategy, one chooses $\hatX$ and $\hatY$ such that $|\hatX|, |\hatY|$ are large and sufficiently dense in $X, Y$ respectively. As a result, for the optimal solution $\hzeta^\ast = (\hbc^\ast, \he^\ast)$ to the LP, we first have that $|e^\ast - \he^\ast|$ is small, and moreover if $\hbc^\ast \not \in F$, then $\norm{\inner{\hbc^\ast}}{X} - 1$ is also small. This issue of infeasibility, i.e. $\hbc^\ast \not \in F$, is especially pronounced when approaching the fully-coherent regime, which is demonstrated on the uniform singular value amplification task from \cref{eq:uniformamp} in \Cref{fig:constraintviol}. We choose $\beta = 1$, $\Gamma = 0.2$, and $d=101$, and then solve instances of the resulting LP for different values of $|\hatX|$ (the LP setup follows the same details as mentioned in the beginning of \Cref{ssec:numerical-examples}). We see that while the maximum constraint violation measured by $\norm{\inner{\hbc^\ast}}{X} - 1$ decreases as $|\hatX|$ increases, in every single case the optimal solution $\hbc^\ast$ is infeasible for the original problem~\cref{eq:intro-opt-prob}, no matter how large a value of $|\hatX|$ is chosen. This necessitates the need for a numerical method that takes a solution that is \textit{nearly feasible}\footnote{An infeasible solution $\bc$ to \cref{eq:intro-opt-prob} is understood to be nearly feasible if $\norm{\inner{\bc}}{X} \le 1 + \delta$, for some small $\delta > 0$.}, e.g. produced by solving the LP, or via a different method such as those of Remez-type, and gives a feasible solution to \cref{eq:intro-opt-prob} --- one simple method to do this is based on scaling the resulting polynomial, but we present another method in this paper in \Cref{sec:nonlinear-retraction} that performs better on many practical QSP problems of interest.

\subsubsection{Certification of feasibility}
\label{sssec:certified-feasibility}
As the above discussion suggests, when solving problem~\cref{eq:intro-opt-prob}, and especially in the fully-coherent regime, it is important to be able to test numerically whether a candidate solution $\bc \in \RR^{\tilde{d}}$ to the problem is feasible or not. The method that we will employ in this paper is based on finding all the roots of the derivative of the polynomial $\inner{\bc}$ in the interval $X$, which isolates all the local extremas (minimas and maximas), and then checks whether $1 \le \inner{\bc}(x) \le 1$, for all $x$ belonging to this extremal set, and at the endpoints $x=0,1$. Consequently, in case of constraint violations, this also gives us the maximum constraint violation $\norm{\inner{\bc}}{X}-1$. Stable implementations of this algorithm are provided by software libraries like \texttt{chebfun} \cite{driscoll2014chebfun}; we have also included an implementation as \texttt{check\_feasibility} in \texttt{qsppack.utils}.

\subsection{QSP and the nonlinear Fourier transform}
\label{ssec:qsp-nlft}

As mentioned in \Cref{sec:intro}, when using the Chebyshev QSP protocol in designing quantum algorithms, one needs to first approximate the target function $g \in C(Y)$, with $\norm{g}{Y} \le 1$, by a real polynomial $P$ with parity $d \bmod{2}$, and this is called the \textit{approximation} step. Since both sides of \cref{eq:qsp-protocol} are elements of $\textrm{SU}(2)$, we have $|P'(x)|^2 + (1-x^2)|Q'(x)^2|=1$ for all $x \in X$, and, in particular, this implies $\norm{\Re(P')}{X} = \norm{P}{X} \le 1$. However, before one can proceed to the next step of finding a set of phase factors $\Psi$, one must find polynomials $P', Q' \in \mathbb{C}[x]$ given $\Im(P')= P \in \mathbb{R}[x]$. This is known as the \textit{completion problem}. Polynomial factorization-based methods have been suggested in the literature \cite{Gily_n_2019, Haah2019} for solving the completion problem, but they all rely on root-finding, which is numerically unstable.

Recently, however, a correspondence between QSP and nonlinear Fourier analysis was discovered in \cite{Alexis2024,Alexis2025,ni2025inversenonlinearfastfourier,laneve2025generalized}. The $\mathrm{SU}(2)$ \textit{nonlinear Fourier transform} (NLFT) maps a sequence $\bga = \left\{ \gamma_k \in \mathbb{C} : k = 0, \dots, d\right\}$ to $\overbrace{\boldsymbol{\gamma}} \in \mathrm{SU}(2)$, defined by
\vspace*{-0.2cm}
\begin{equation}
\label{eq:nlft}
    \overbrace{\boldsymbol{\gamma}}(z) := \begin{pmatrix}
        a(z) & b(z) \\ -b^*(z) & a^*(z)
    \end{pmatrix} = \prod_{k=0}^d \frac{1}{\sqrt{1+|\gamma_k|^2}}\begin{pmatrix}
        1 & \gamma_k z^k \\ -\overline{\gamma_k}z^{-k} & 1
    \end{pmatrix},
    \vspace*{-0.2cm}
\end{equation}
where $b(z)$ and $a^\ast(z)$ are complex polynomials\footnote{Here $z$ is a complex variable. Alternatively, $b(z), a^\ast(z)$ can also be thought of as functions on $\TT$, and analytic continuation uniquely defines them everywhere in $\CC$.} of degree at most $d$ (see \cite{Alexis2024,ni2025inversenonlinearfastfourier} for detailed properties of the NLFT). The NLFT map is injective, so given polynomial pairs $(b(z), a^\ast(z))$ in the image of the map, the inverse map of finding $\bga$ is well-defined and is called the \textit{inverse nonlinear Fourier transform}. As explained in \cite[Lemma~3.1]{ni2025inversenonlinearfastfourier}, the NLFT allows us to relate $U_d(\Psi,x)$ of \cref{eq:qsp-protocol} to the NLFT of a sequence $\boldsymbol{\gamma}$ defined by $\gamma_k = \tan \psi_k$ for $k=0,\dots,d$ via the identity\footnote{We call this the \textit{QSP-NLFT correspondence}.}
\vspace*{-0.1cm}
\begin{equation}
\label{eq:HUdH-lemma-nonsym}
\begin{pmatrix}
    1&\\&i
\end{pmatrix} H U_d(\Psi,\cos\theta) H \begin{pmatrix}
    1&\\&-i
\end{pmatrix} = \overbrace{\bga}( e^{2i\theta}) 
\begin{pmatrix}
    e^{i d \theta} & 0 \\
    0 & e^{-i d \theta}
\end{pmatrix}, \quad x = \cos \theta,
\end{equation}
where $H := \frac{1}{\sqrt{2}}  \left(\begin{smallmatrix}
    1 & 1 \\1 & -1
\end{smallmatrix}\right)$. Given $P$, a choice of $b \in \mathbb{C}[z]$ such that \cref{eq:HUdH-lemma-nonsym} holds is $b(e^{2i \theta}) := e^{i d \theta} P(\cos\theta)$ (see \cite[Section~3.3]{ni2025inversenonlinearfastfourier})\footnote{This choice also implies that $\norm{b}{L^\infty(\TT)} = \norm{P}{X}$, and the maximum modulus principle then gives $\sup_{z \in \overline{\DD}} |b(z)| = \norm{P}{X} \le 1$.}, and this mapping between the Chebyshev coefficients of $P$ and the monomial coefficients of $b$ is made explicit in \Cref{appssec:coefmap}. Since $P \in \mathbb{R}[x]$ in our context, this mapping implies that $b$ has real monomial coefficients as well.

Given a polynomial $b(z)$ with $\norm{b}{L^\infty(\TT)} \le 1$, there exists a unique polynomial $a^\ast(z)$ such that $\left(\begin{smallmatrix} a(z) & b(z) \\ -b^\ast(z) & a^\ast(z)
\end{smallmatrix} \right)$ is the NLFT of some $\bga \in \CC^{d+1}$, and $a^\ast$ is an \textit{outer polynomial}, meaning that it has no zeros in $\DD$ \cite[Lemma~A.2]{ni2025inversenonlinearfastfourier}. Under the stricter assumption that $\norm{b}{L^\infty(\TT)} \le 1 - \xi$ for some $0 < \xi < 1$, it is shown in \cite{Alexis2025} that there is a simple method called the \textit{Weiss algorithm} that finds this unique outer $a^\ast$, and the algorithm is numerically stable. In our QSP context, this provides a solution to the completion problem when $\norm{P}{X} \le 1 - \xi$ via the QSP-NLFT correspondence \eqref{eq:HUdH-lemma-nonsym}, completely avoiding root-finding procedures. Furthermore, it is shown in \cite{ni2025inversenonlinearfastfourier} that, under these same assumptions on $b$ and $a^\ast$, two algorithms for computing the inverse NLFT --- ( a) \textit{layer stripping}, with runtime $\mathcal{O}(d^2)$, and (b) the \textit{inverse nonlinear fast Fourier transform} (INFFT), with runtime $\mathcal{O}(d \log^2(d))$ --- are also numerically stable. Thus, once we obtain the corresponding $\bga$, the QSP-NLFT correspondence can be used to recover a set of phase factors $\Psi$. Note that the bit requirements for all these algorithms scale as $\text{polylog}(\xi)$ in the $\xi$ parameter.

As the parameter $\xi \rightarrow 0$, we approach the fully coherent regime, and the numerical stability of the Weiss algorithm and the two inverse NLFT algorithms is no longer guaranteed. In this case, finding $a^\ast(z)$ requires root-finding methods using extended-precision arithmetic. This extended-precision requirement also applies to the subsequent inverse NLFT algorithms and appears in prior work for solving the completion and phase-factor-finding steps of QSP without going via the NLFT route. However, the recent work \cite{DongLinNiEtAl2024_newton} suggests an alternative iterative method based on Newton iterations that directly finds $\Psi$ without needing to solve the completion problem. This method behaves well in the fully coherent regime and does not require extended-precision arithmetic. All of these methods for dealing with the fully-coherent regime implicitly require $\norm{P}{X} = \norm{b}{L^\infty(\TT)} \le 1$. However, as we discussed previously in \Cref{ssec:finite-discretization}, this assumption often fails when $P$ is obtained by solving the corresponding LP \cref{eq:lsip-prob}. When $\norm{P}{X} > 1$, but say $\norm{P}{X} - 1$ is small, i.e. we have a nearly feasible solution to \cref{eq:intro-opt-prob} that very slightly violates the bound constraints, one of our key numerical observations is that combining the Weiss (with slight modifications) and inverse NLFT algorithms leads to an efficient heuristic to compute $\Psi$ for many problems of interest in QSP, even though the strict assumptions guaranteeing their numerical stability do not hold (see \Cref{sec:nonlinear-retraction}).

\subsection{Candidate QSP problems for numerics}
\label{ssec:expt}
Three important and relevant QSP problems are chosen for the numerical demonstrations of the proposed methods in \Cref{ssec:remez-numerical-dems,ssec:numerical-examples}, and \Cref{appssec:retract_exp_matrix,appssec:retract_exp_threshold}, which we list below.
\begin{enumerate}[(i)]
\item \emph{Threshold projector (TP)}. This specific example is discussed in \cite[Section~4.1]{Gily_n_2019} and has use for singular value discrimination and principal component regression. For this problem, the target function takes the following form
\vspace*{-0.3cm}
\begin{equation}
\label{eq:threshtarget}
    g_{\strut \text{TP}}(x) := \begin{cases}
        1 & x \in [0,0.5-\delta] \\
        0 & x \in [0.5+\delta,1] \,,
    \end{cases}
\vspace*{-0.3cm}
\end{equation}
with $Y=[0,0.5-\delta]\cup [0.5+\delta,1]$, for some $0 < \delta < 0.5$. It has been used for ground state preparation and energy estimation \cite{Dong2021} and was even used recently as the backbone of quantum oracle sketching \cite{zhao2026exponentialquantumadvantageprocessing} and for Gibbs sampling \cite{leng2026acceleratingquantumgibbssampling}. We will use $\delta = 0.05$ in our experiments.

\item \emph{Matrix inversion (MI)}. We already encountered this example in \Cref{sec:intro}, and here we set $\beta = 1$, that gives rise to the target function $g_{\strut \text{MI}}(x)=\frac{1}{\kappa x}$ on the interval $Y=\left[\frac{1}{\kappa},1 \right]$. We will use $\kappa=10$. Recently, this example has been used to develop an optimal quantum linear solver with the smallest known constant factor guarantees \cite{costa2026constantfactoranalysisoptimal}.

\item \emph{Uniform singular value amplification (USVA)}. This example was also introduced in \Cref{sec:intro}, specifically \cref{eq:uniformamp}, and here we set $\beta=1$, giving rise to the target function $g_{\strut \text{SV}}(x)=\frac{x}{\Gamma}$ on the fitting set $Y=[0, \Gamma]$. We will use $\Gamma=0.2$.
\end{enumerate}
These examples are chosen because they correspond to the fully-coherent regime, which is the focus of this work.

\section{A Remez and active-set based algorithm}
\label{sec:remez}

In this section, we develop a heuristic algorithm inspired by the Remez-exchange method \cite{Chalmers1976,Lai2003,Lai2004} and the active-set method \cite{murty1988linear}, which serves as a baseline for comparison. We test this algorithm on several representative QSP approximation problems, that we mentioned in \Cref{ssec:expt}. The numerical results show that, while the method can produce high-quality approximations in favorable cases, it may also become suboptimal or numerically unstable in more delicate regimes. These observations highlight intrinsic limitations of this Remez- and active-set-based approach and further motivate the method developed in \Cref{sec:nonlinear-retraction}. Remez-based methods require equioscillation of the best approximating polynomial \cite{remez1934determination,reemtsen1990modifications}, which we assume holds for \cref{eq:intro-opt-prob} as well.

\subsection{Algorithm}
Consider the approximation problem from \eqref{eq:intro-opt-prob} and \Cref{ssec:qsp-nlft} with a target function $g$, using a QSP-based polynomial class of degree $d$, in which the polynomial takes the form $f(x) = \sum_{j = 0}^d \chi_j T_j(x)$ with $\chi_j = 0$ for all $j \not\equiv d \pmod{2}$. Note that the $\chi_j$ coefficients are the same as the $c_{j}$ coefficients from \eqref{eq:intro-opt-prob} but with the zeros included, so that $j$ corresponds to the degree of the respective Chebyshev polynomial.  
Employing the mapping $x = \cos(\omega)$, this approximation problem can be written in the $\omega$-space with
\vspace*{-0.5cm}
\begin{equation}
    \tilde{g}(\omega) := g(\cos(\omega)), \qquad \tilde{f}(\omega) := f(\cos(\omega)) = \sum_{j = 0}^d \chi_j \cos(j \omega).
\vspace*{-0.3cm}
\end{equation}
This choice improves numerical implementation and stability. For example, equally spaced points in $\omega$-space correspond to Chebyshev nodes, which typically lead to numerically stable function evaluation and transformations. Due to the parity constraint, it suffices to consider the half interval $\Omega := [0, \pi/2] = \arccos([0, 1])$. The structure of the approximation problem can be divided into two parts: (i) the approximation target set $\Omega_0 := \arccos(Y)$, on which we minimize the approximation error to the target, and (ii) the boundedness constraint set $\Omega_c := \Omega \setminus \Omega_0$, on which we ensure the approximation is bounded between $\pm 1$. Note that if the approximation error is small, then the approximation polynomial oscillates around the target function, so boundedness is approximately satisfied on $\Omega_0$. As a remark, we treat the problem as a surrogate by relaxing the bound constraint on $\Omega_0$, with feasibility (global boundedness) restored later via a scaling procedure in \cref{eqn:scaling-Remez,eqn:scaling-Remez-error}.

When the constraint set is empty, namely $\Omega_c = \emptyset$, the best approximation can be solved by the Remez exchange method \cite{Chalmers1976}. Since the best approximation error is equioscillating on the approximation interval, the Remez exchange method maintains a set of extremal points with alternating signs together with the ripple amplitude. This gives an iterative method to update these error characteristics and the polynomial coefficients at each step. When inequality constraints are imposed and $\Omega_c$ is disjoint from $\Omega_0$, a modified Remez algorithm was proposed in \cite{Lai2003,Lai2004} by maintaining the equioscillation ripple on $\Omega_0$ together with the violation points of the inequality constraint on $\Omega_c$. This method does not apply directly to our case, since the required disjointness condition does not hold.

To numerically solve the best approximation polynomial in our case, which involves approximation on a subinterval together with global boundedness, we propose a method inspired by the modified Remez exchange method and the active-set method. The idea is to maintain a set of extremal points on $\Omega_0$ together with a set of active points that violate the inequality constraints on $\Omega_c$. At each iteration, we update the approximation as follows.

We first handle the boundedness constraint $-1 \le \tilde{f}(\omega) \le 1$ on $\Omega_c$. By evaluating the first-order condition $\tilde{f}^\prime(\omega) = 0$, we obtain a set of extremal points, together with the endpoints of the subintervals. These points represent the locations where the approximation polynomial may locally attain its largest amplitude. We then examine the constraint violation at these candidate points and define the following active set:
\vspace*{-0.1cm}
\begin{equation}
    \mathcal{S}_c = \{ \omega \in \Omega_c : \tilde{f}^\prime(\omega) = 0 \text{ or } \omega \in \partial \Omega_c,\ |\tilde{f}(\omega)| > 1  \}.
\vspace*{-0.2cm}
\end{equation}
At these violation points, we impose equality constraints by pinning the updated approximation polynomial to the nearest bound:
\vspace*{-0.3cm}
\begin{equation}\label{eqn:ineq_constraint}
    \tilde{f}^\star(\omega_j^c) = v(\omega_j^c) := \left\{ \begin{array}{ll}
        1 & \text{when } \tilde{f}(\omega_j^c) > 1 \\
        -1 & \text{when } \tilde{f}(\omega_j^c) < -1
    \end{array} \right., \ j = 1, \cdots, |\mathcal{S}_c|.
\vspace*{-0.3cm}
\end{equation}
We then handle the approximation error on the subinterval $\Omega_0$ by maintaining an equioscillation ripple. The maximum amplitude of the error is attained at points where the approximation error $e(\omega) := \tilde{f}(\omega) - \tilde{g}(\omega)$ satisfies the first-order condition, together with the endpoints. We collect these points into
\vspace*{-0.2cm}
\begin{equation}
    \mathcal{S}_0 = \{\omega \in \Omega_0 : e^\prime(\omega) = 0 \text{ or } \omega \in \partial \Omega_0\}.
\vspace*{-0.2cm}
\end{equation}
In the minimax regime, the approximation error should alternate in sign across these ripple points. Sorting these points from small to large as $\mathcal{S}_0 = \{\omega^0_j : j = 1, \cdots, |\mathcal{S}_0|\}$, we impose the alternating ripple condition
\vspace*{-0.1cm}
\begin{equation}\label{eqn:eq_constraint}
    \tilde{f}^\star(\omega_j^0) + (-1)^j \Delta = \tilde{g}(\omega_j^0), \ j = 1, \cdots, |\mathcal{S}_0|.
\vspace*{-0.1cm}
\end{equation}
We see that \cref{eqn:ineq_constraint,eqn:eq_constraint} form a system of linear equations for the $m = \lceil (d+1) / 2 \rceil$ coefficients and the ripple width parameter $\Delta$, where $m$ is the number of available cosine coefficients. We can then solve for the updated approximation polynomial $\tilde{f}^\star(\omega)$. This system may be overdetermined when $|\mathcal{S}_0| + |\mathcal{S}_c| > m + 1$. In practice, we select $m_\text{alt} = m+1 - |\mathcal{S}_c|$ points with the largest error amplitudes from $\mathcal{S}_0$ to form the update system, and iterate this process as an inner loop.

Note that the treatment on $\Omega_0$ follows the spirit of the Remez exchange method, while the treatment on $\Omega_c$ follows the spirit of the active-set method. By iterating this procedure, we obtain an approximation polynomial $\tilde{f}^\text{raw}(\omega)$. This function may still violate the boundedness condition on the full interval. Therefore, in the final step, we evaluate its maximal amplitude by solving the first-order condition. This gives the maximal amplitude $\|\tilde{f}^\text{raw}\|_\infty$. We then define the scaling constant by $A^\text{raw} := \max\{1, \|\tilde{f}^\text{raw}\|_\infty\}$. The output of the method is the scaled approximation polynomial
\vspace*{-0.5cm}
\begin{equation}\label{eqn:scaling-Remez}
    f^\text{final}(x) = \tilde{f}^\text{final}(\arccos(x)) = \frac{\tilde{f}^\text{raw}(\arccos(x))}{A^\text{raw}}.
\vspace*{-0.3cm}
\end{equation}
Now suppose the approximation error of the raw polynomial is
\vspace*{-0.3cm}
\begin{equation*}
    \epsilon^\text{raw} := \max_{\omega \in \Omega_0} |\tilde{f}^\text{raw}(\omega) - \tilde{g}(\omega)|,
\vspace*{-0.1cm}
\end{equation*}
and the overshoot error is $\epsilon^\text{amp} := A^\text{raw} - 1$. By the triangle inequality, we obtain
\vspace*{-0.1cm}
\begin{equation}\label{eqn:scaling-Remez-error}
    \max_{x \in Y} | f^\text{final}(x) - g(x) | \le \epsilon^\text{amp} + \epsilon^\text{raw}.
\vspace*{-0.1cm}
\end{equation}
Hence, the final output preserves the raw approximation quality only when the overshoot error is smaller than, or at least comparable to, the raw approximation error.

\begin{algorithm}
\caption{A Remez and active-set based algorithm for QSP polynomial approximation}
\label{alg:constrained_remez_qsp}
\textbf{Input:} Degree $d$, target set $Y \subset [0,1]$, target function $g(x)$. \\
\textbf{Output:} Approximation polynomial $f^\text{final}(x)=\sum_{j=0}^{d} \chi_j T_j(x)$ with $c_j=0$ for all $j \not\equiv d \pmod{2}$.
\begin{algorithmic}[1]
    \STATE Set $\Omega_0 := \arccos(Y),\ \Omega_c := [0,\pi/2]\setminus \Omega_0$ and initialize $\tilde f(\omega) = \sum_{j=0}^d \chi_j \cos(j \omega)$ with $\chi_j=0$ for all $j \not\equiv d \pmod{2}$.
    \FOR{$k=1,2,\dots$ until convergence}
        \STATE Find candidate constraint extrema on $\Omega_c$ from $\tilde f'(\omega)=0$ together with $\partial \Omega_c$
        \STATE Form the active set $\mathcal S_c$ from points with $|\tilde f(\omega)| > 1$
        \STATE Pin each $\omega_j^c \in \mathcal S_c$ to the nearest bound $\tilde f^\star(\omega_j^c) = v(\omega_j^c)\in\{\pm 1\}$
        \STATE Find candidate error extrema on $\Omega_0$ from $e'(\omega)=0$, where $e(\omega)=\tilde f(\omega)-\tilde g(\omega)$, together with $\partial \Omega_0$
        \STATE Select $m_\mathrm{alt}=m+1-|\mathcal S_c|$ points with the largest error amplitudes from $\mathcal S_0$
        \STATE Solve the linear system given by the active constraints $\tilde f^\star(\omega_j^c)=v(\omega_j^c)$ and the alternating ripple conditions $\tilde f^\star(\omega_j^0)+(-1)^j\Delta=\tilde g(\omega_j^0)$
        \STATE Update $\tilde f \leftarrow \tilde f^\star$
        \STATE Stop if the constraint violation changes by less than the prescribed tolerance
    \ENDFOR
    \STATE Denote the output by $\tilde f^\text{raw}$
    \STATE Compute $A^\text{raw} := \max\{1,\|\tilde f^\text{raw}\|_\infty\}$
    \STATE Set $\tilde f^\text{final}(\omega) := \tilde f^\text{raw}(\omega)/A^\text{raw} = \sum_j \chi_j^\text{raw}/A^\text{raw} \cos(j \omega)$
    \RETURN $f^\text{final}(x) = \sum_j c_j^\text{raw}/A^\text{raw} T_j(x)$
\end{algorithmic}
\end{algorithm}

\subsection{Numerical demonstrations\label{ssec:remez-numerical-dems}}
In this subsection, we numerically test \cref{alg:constrained_remez_qsp} on QSP problems of interest, that we introduced in \Cref{ssec:expt}, with the parameters set to be the same as mentioned there. 
\begin{figure}
    \centering
    \begin{subfigure}[t]{0.43\linewidth}
        \centering
        \includegraphics[width=\linewidth]{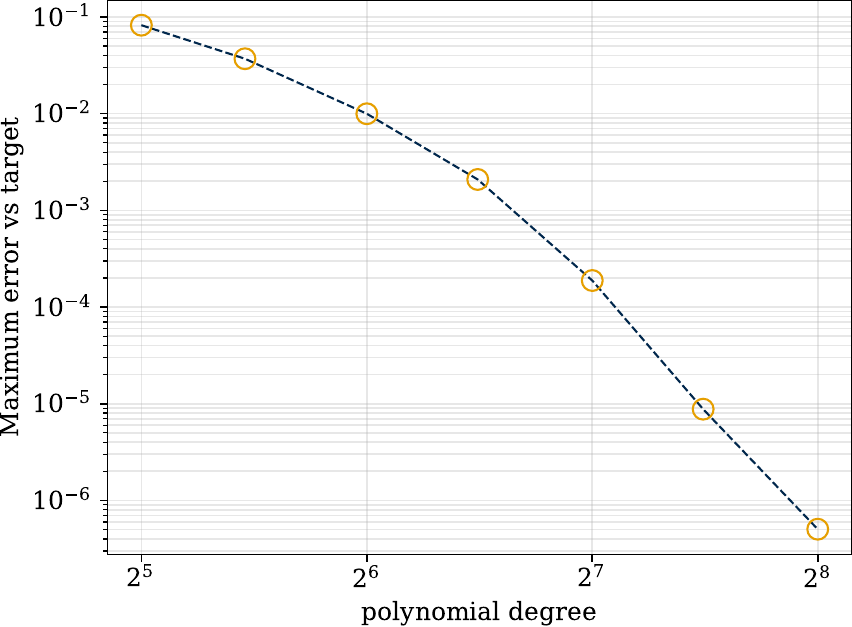}
        \caption{Error scaling.}
        \label{fig:Remez_threshold_projection}
    \end{subfigure}
    \hfill
    \begin{subfigure}[t]{0.55\linewidth}
        \centering
        \includegraphics[width=\linewidth]{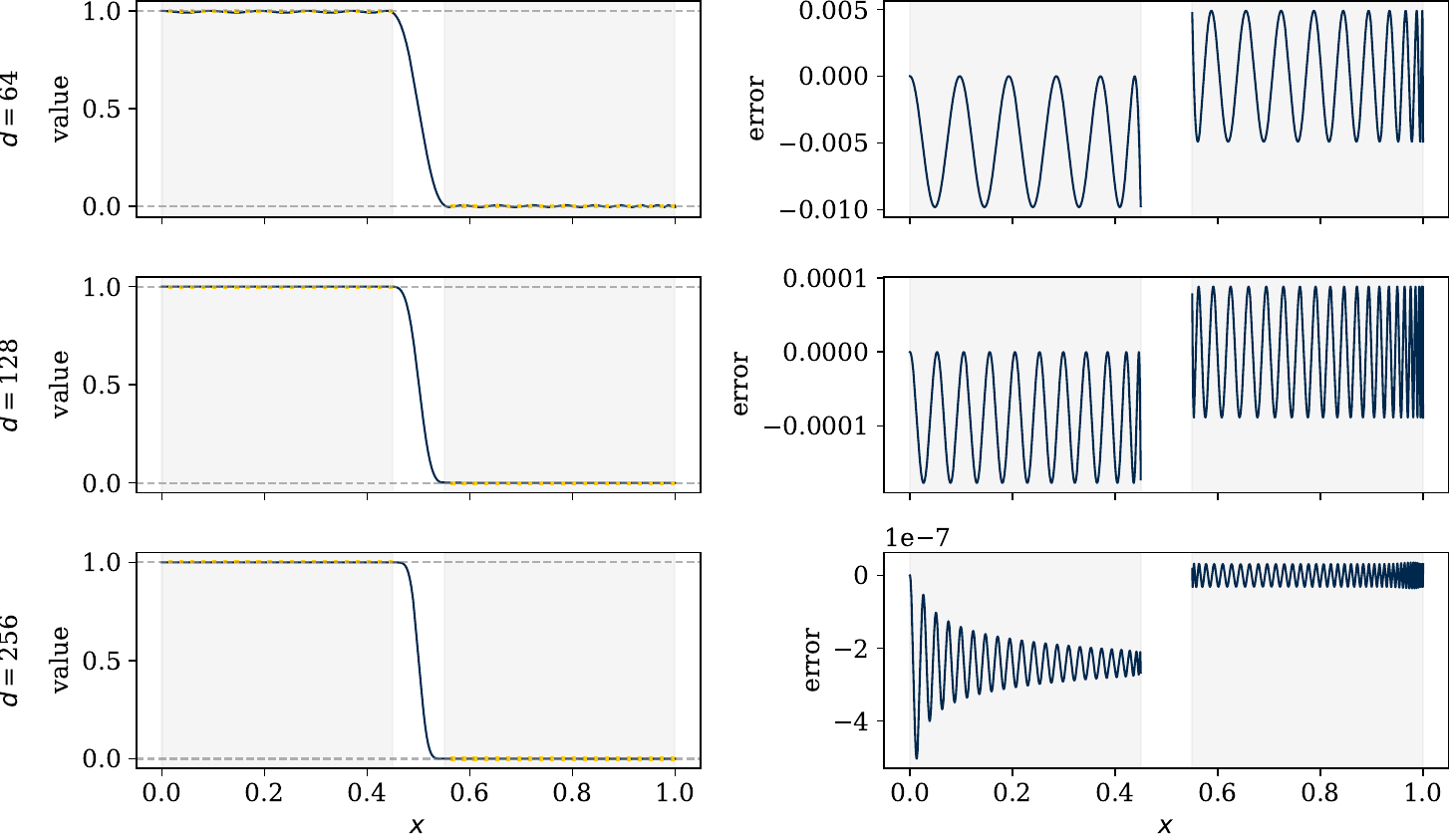}
        \caption{Pointwise approximation error. Maize dashed lines stand for the target function $g_{\strut \text{TP}}$.}
        \label{fig:Remez_threshold_projection_scaled_error}
    \end{subfigure}
    
    \vspace{-15pt}
    \caption{Performance of the Remez and active-set based method for $g_{\strut \text{TP}}$.}
    \label{fig:remez_thresh_proj}
    \vspace{-20pt}
\end{figure}

We first test the algorithm on the threshold projector target function $g_{\strut \text{TP}}$. The results are depicted in \Cref{fig:remez_thresh_proj}. Note that the error agrees well with the results in \Cref{fig:degree_scaling_thresh_proj} attained from methods described in \Cref{sec:nonlinear-retraction}; the only suboptimal result occurs at $d = 256$. From the pointwise error plot, we find that other degree values demonstrate an equioscillating pattern, suggesting an optimal result by the Remez-like construction. However, when $d = 256$, the result lacks equioscillation, justifying the suboptimality of the approximation. Though other numerical-trick parameters were explored, the result did not significantly improve.

\begin{figure}
    \centering
    \begin{subfigure}[t]{0.43\linewidth}
        \centering
        \includegraphics[width=\linewidth]{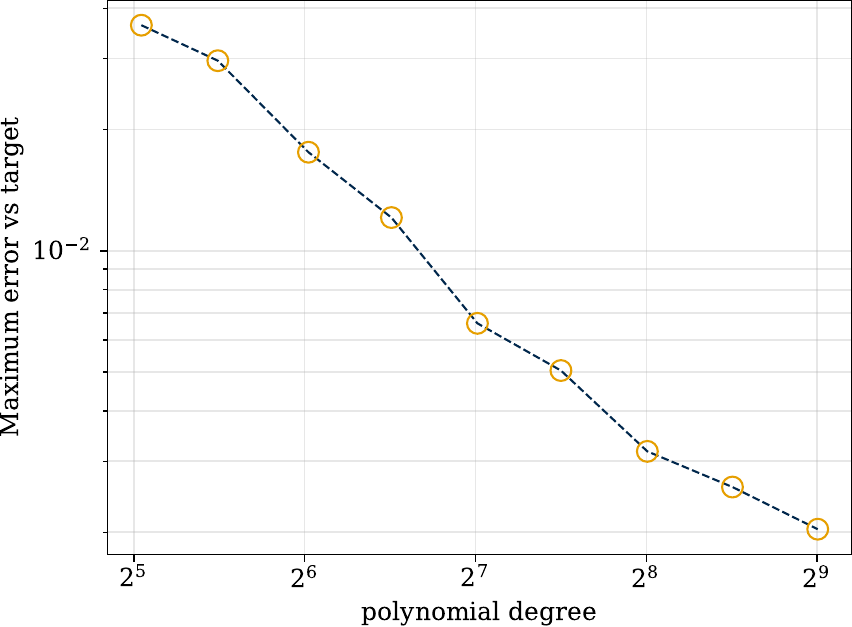}
        \caption{Error scaling.}
        \label{fig:Remez_linear_system}
    \end{subfigure}
    \hfill
    \begin{subfigure}[t]{0.55\linewidth}
        \centering
        \includegraphics[width=\linewidth]{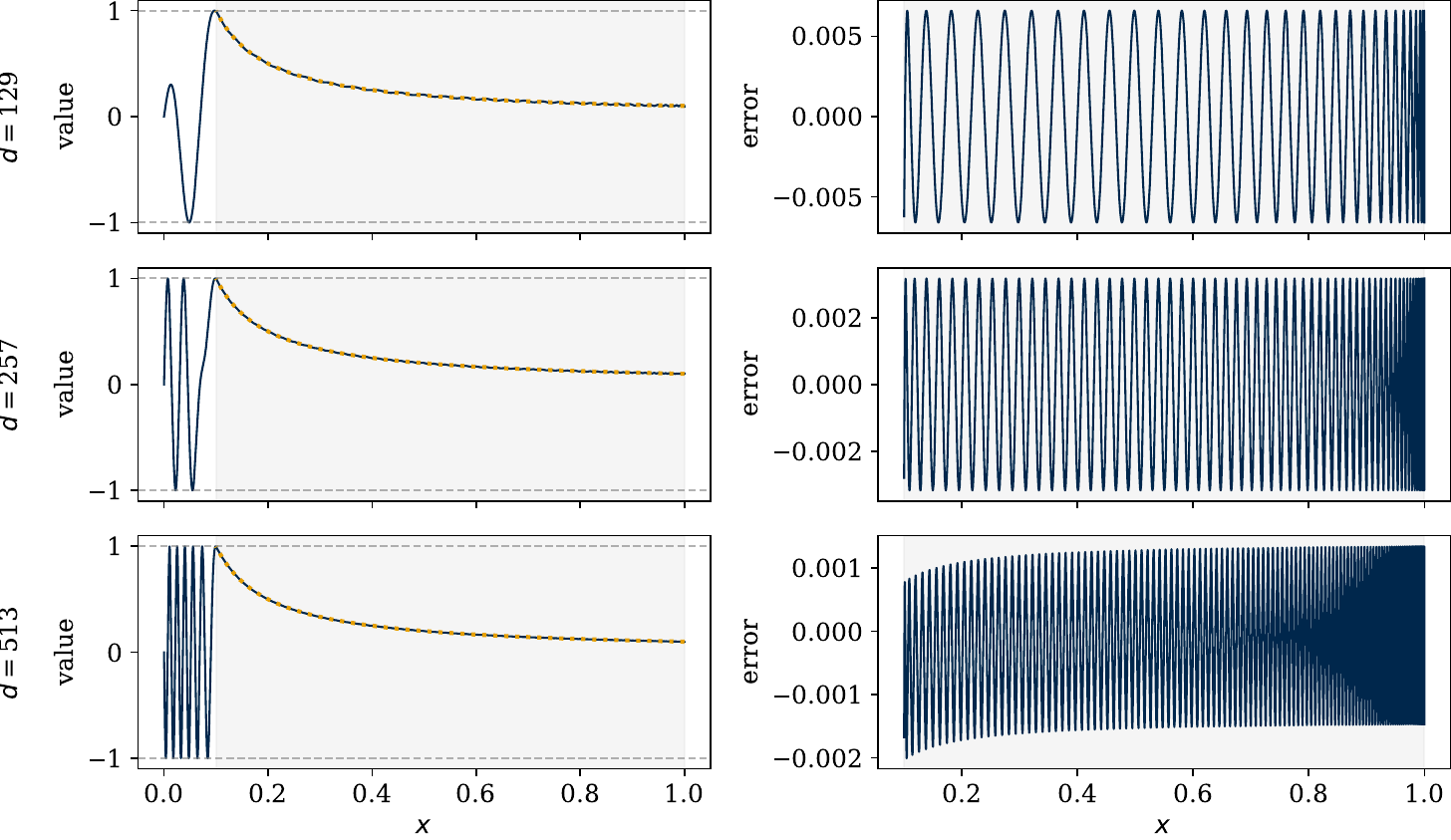}
        \caption{Pointwise approximation error. Maize dashed lines stand for the target function $g_{\strut \text{MI}}$.}
        \label{fig:Remez_linear_system_error}
    \end{subfigure}
    
    \vspace{-15pt}
    \caption{Performance of the Remez and active-set based method for $g_{\strut \text{MI}}$.}
    \label{fig:remez_linear_system}
    \vspace{-30pt}
\end{figure}

We then test the matrix inversion problem. From \cref{fig:remez_linear_system}, we again see that the performance agrees well with the results in \cref{fig:degree_scaling_mat_inv}  for $d \leq 257$. We further visualize the pointwise error in \cref{fig:Remez_linear_system_error}. The plot shows that the equioscillating structure breaks down near the endpoint for $d = 513$. This breakdown is due to the approximation polynomial requiring a small buffer region to transition from the upper bound value one to smaller values, which resembles a reflection effect near the boundary. This leads to suboptimal behavior near the endpoint. 

\begin{figure}
    \centering
    \begin{subfigure}[t]{0.43\linewidth}
        \centering
        \includegraphics[width=\linewidth]{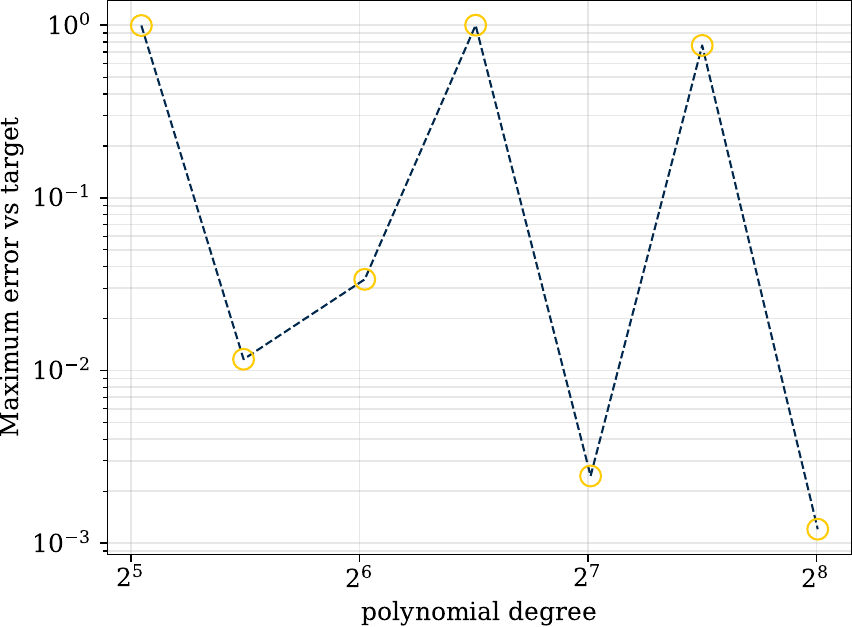}
        \caption{Error scaling.}
        \label{fig:Remez_amplitude_amplification}
    \end{subfigure}
    \hfill
    \begin{subfigure}[t]{0.55\linewidth}
        \centering
        \includegraphics[width=\linewidth]{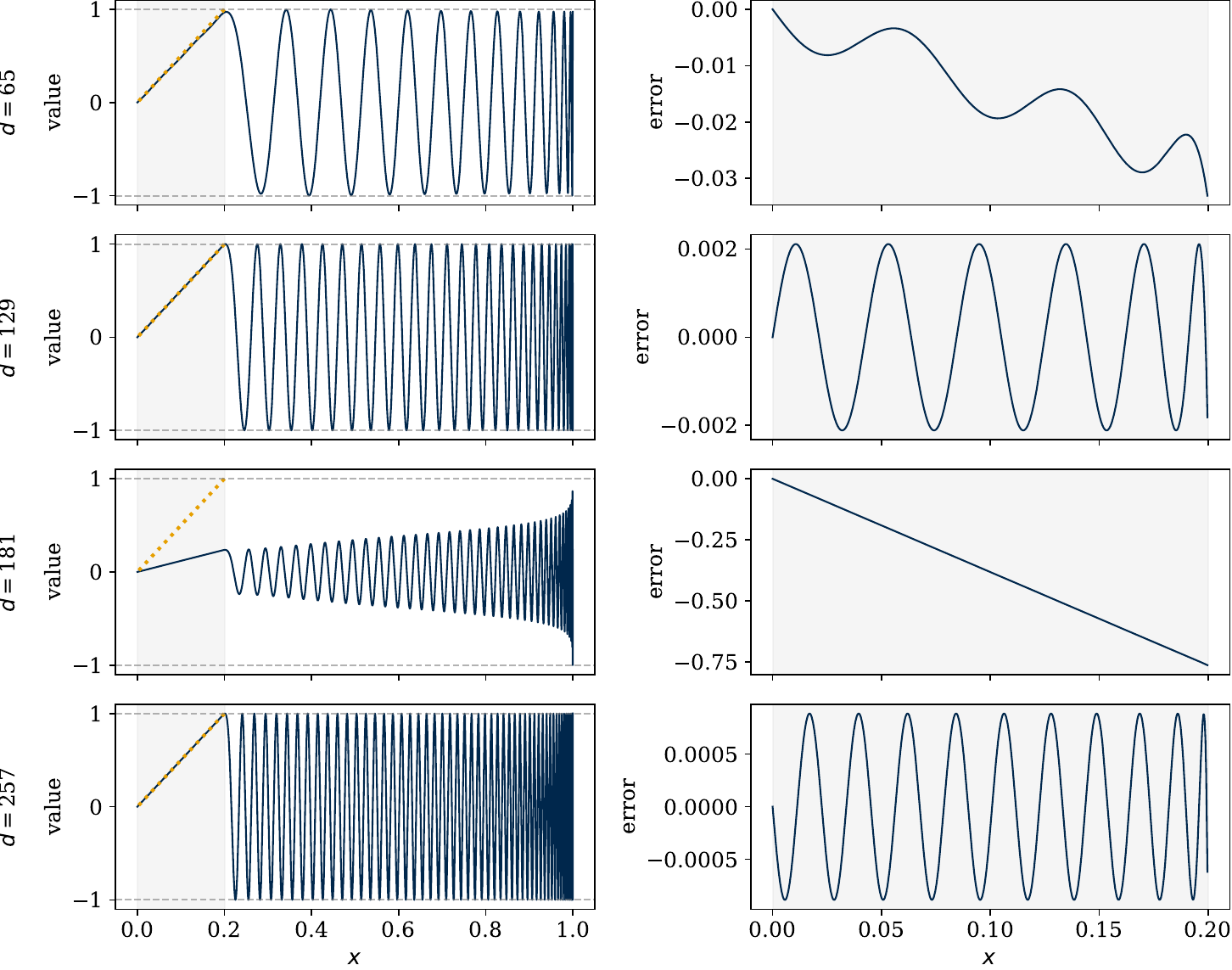}
        \caption{Pointwise approximation error. Maize dashed lines stand for the target function $g_{\strut \text{SV}}$.}
        \label{fig:Remez_amplitude_amplification_error}
    \end{subfigure}
    
    \vspace{-15pt}
    \caption{Performance of the Remez and active-set based method for $g_{\strut \text{SV}}$.}
    \label{fig:remez_amplitude_amplification}
    \vspace{-30pt}
\end{figure}

Finally, we test the case of uniform singular value amplification. In practice, this function is often approximated by taking the coefficients for even polynomials approximating rectangular functions and increasing the degree by one. However, these approximations will not be optimal because they require values of $x$ outside $[0,\Gamma]$ to approximate zero. The numerical results in \cref{fig:remez_amplitude_amplification} show more complicated behavior than previous numerical experiments. In particular, the approximation error does not necessarily decrease as $d$ increases. From the pointwise error plots, we observe that the equioscillating pattern is not always preserved, and the numerical behavior can vary significantly with the degree $d$. 

A possible explanation comes from the structure of the target function. Since the target is proportional to $x$, it is itself one of the basis functions in the construction. On the target interval $Y$, the algorithm may therefore incorrectly interpret this low-degree component as already providing a satisfactory approximation. However, outside $Y$, higher-degree components are still needed to enforce boundedness and produce the required nontrivial behavior. This tension can lead either to a breakdown of the equioscillating pattern or to a large overshoot error. In either case, the resulting approximation quality becomes poor. This phenomenon is not observed in the previous two examples because their target functions have nontrivial projections onto a broader range of basis components, so the algorithm is less likely to become trapped in such a misleading low-degree approximation.

To summarize, the Remez and active-set based algorithm in \cref{alg:constrained_remez_qsp} combines a Remez-like mechanism for maintaining an equioscillating approximation pattern on the target set with an active-set strategy for enforcing boundedness constraints. This makes the method computationally efficient and reduces the need for heavy discretization over the full domain. At the same time, our numerical experiments show that the method may become suboptimal or even highly unstable in certain cases, especially when the active constraints interact unfavorably with the equioscillation structure. These observations highlight the limitations of the present iterative construction and motivate the development below of a more versatile and robust procedure for polishing approximation polynomials.

\section{Nonlinear Fourier retraction}
\label{sec:nonlinear-retraction}
\label{ssec:retraction-intro}

In this section, we develop a heuristic algorithm to overcome the challenge of nearly feasible solutions of \cref{eq:intro-opt-prob} in the QSP workflow, and then apply the resulting algorithm to the QSP problems from \Cref{ssec:expt}.

\subsection{The algorithm}
\label{ssec:alg-retraction}
The Weiss algorithm, as introduced in \cite[Section~2.3]{Alexis2025} and as applicable to our setting, takes as input a polynomial $b(z)$ with $|b(z)| \neq 1$ almost everywhere (a.e.)\footnote{This implies $\log (1 - |b(z)|^2) \in L^1(\TT)$, which is called the Szeg\H{o} condition in \cite{Alexis2025}.} on $\TT$, and $\norm{b}{L^\infty(\TT)} \le 1$ (this ensures $|b(z)| < 1$ for all $z \in \DD$ by the maximum modulus principle), and outputs the unique outer polynomial $a^\ast(z)$ such that $a(z)a^\ast(z) + b(z)b^\ast(z) = 1$ identically for all $z \in \CC$. The algorithm is defined by the following relations, and the computations run left to right (here we have modified the algorithm, as compared to \cite{Alexis2025}, to directly output $a^\ast$ instead of $a$):
\vspace*{-0.2cm}
\begin{equation}
\label{eq:weiss}
R(z) = \frac{1}{2}\log (1-|b(z)|^2), \quad G(z) = R(z) + i \mathcal{H}(R(z)), \quad a^\ast(z) = e^{G(z)}, \quad z \in \TT,
\vspace*{-0.2cm}
\end{equation}
where $\mathcal{H}(\cdot)$ denotes the Hilbert transform\footnote{The Hilbert transform is defined for a smooth function $f$ on $\TT$ by the principal value formula $\mathcal{H}(f)(e^{i\zeta}) := (2 \pi)^{-1} \text{p.v.} \int_{0}^{2 \pi} f(e^{i \theta}) \cot \left( \frac{\zeta - \theta}{2} \right) d\theta$.}. For example, its action on trigonometric polynomials is given by $\mathcal{H}(e^{ik\theta}) := -i \;\text{sgn}(k) e^{i k \theta}, \; \theta \in [0,2\pi], \; k \in \mathbb{Z}$, where $\text{sgn}(k) = -1, 0, 1$ for $k < 0$, $k=0$ and $k>0$ respectively. Under the assumptions on $b$, one can show that $G$ is holomorphic on $\DD$, $a^\ast(0)$ is real and positive (see the proof of \cite[Theorem~4]{Alexis2025} for details), and if all coefficients of $b$ are real, then the same holds for the coefficients of $a^\ast$ in the monomial basis.

We now return to our modification of the Weiss algorithm, that we call \textit{modified Weiss}, or \textsc{MWeiss} in short, that handles the situation when $\norm{b}{L^\infty(\TT)} > 1$ (still with $|b(z)| \neq 1$ a.e. on $\TT$), which is derived from the case of a nearly feasible polynomial $P(x)$, as we explained before in \Cref{ssec:qsp-nlft}. The basic algorithm is still given by \cref{eq:weiss}, and the main point of difference of this new algorithm is the definition of $R(z)$, when $|b(z)| > 1$, in which case the argument of $\log$ becomes negative. Let us split $\TT$ into disjoint sets $\TT_{+} := \{z \in \TT: |b(z)| > 1\}$,  $\TT_{-} := \{z \in \TT: |b(z)| < 1\}$, and $\TT_{0} := \{z \in \TT: |b(z)| = 1\}$. Since $b$ is a polynomial and $|b(z)| \ne 1$ a.e. on $\TT$, we know that $\TT_0$ is finite, but both $\TT_{+}$ and $\TT_{-}$ can have positive Haar measure. On $\TT_{-}$, the definition of $R$ stays the same as before, while on $\TT_{+}$, we use the principal branch of $\log$ to define $R$. This leads to the following modified definition:
\vspace*{-0.2cm}
\begin{equation}
\label{eq:R-def}
R(z) = \frac{1}{2}\log (1-|b(z)|^2) := 
\begin{cases}
    \frac{1}{2}\log \left| 1-|b(z)|^2 \right|, \quad & z \in \TT_{-} \\
    \frac{1}{2}\log \left| 1-|b(z)|^2 \right| + \frac{i \pi}{2}, \quad & z \in \TT_{+}
\end{cases}
\vspace*{-0.1cm}
\end{equation}
and note that since $\log \left| 1-|b(\cdot)|^2 \right| \in L^p(\TT)$, for all $1 \le p < \infty$, the same is also true for $R$. Since the Hilbert transform is a bounded operator on $L^p(\TT)$, for every $1 < p < \infty$, this now implies that $\mathcal{H}(R), G \in L^p(\TT)$, for all $1 \le p < \infty$. The Carleson-Hunt theorem (see e.g. \cite{fefferman1973pointwise}) says the Fourier series of $R, \mathcal{H}(R)$, and $G$ converge point-wise a.e. on $\TT$. Let the Fourier series expansion of $R$ be given by $R(e^{i \theta}) = \sum_{k \in \mathbb{Z}} \hat{R}_k e^{i k \theta}$. Then the Hilbert transform yields the Fourier series expansions of $\mathcal{H}(R)$ and $G$:
\vspace*{-0.3cm}
\begin{equation}
\label{eq:HR-G-FS}
\mathcal{H}(R)(e^{i \theta}) = -i \sum_{k \ge 1} (\hat{R}_k e^{i k \theta} - \hat{R}_{-k} e^{-i k \theta}), \quad G(e^{i \theta}) = \hat{R}_0 + 2 \sum_{k \ge 1} \hat{R}_k e^{ik\theta}.
\vspace*{-0.3cm}
\end{equation}
Since $G$ has no negative Fourier modes, we may extend $G$ to a holomorphic function in $\DD$, defined by its Fourier series. Then it is clear that $a^\ast$, being the exponential of a holomorphic function, is also holomorphic with no zeros in $\DD$. However, we no longer have that $a^\ast(0)$ is real and positive (see properties of NLFT in \cite{Alexis2024,ni2025inversenonlinearfastfourier}); instead, we have the following theorem, whose proof can be found in \Cref{appssec:proof-main-theorem}:
\begin{theorem}
\label{thm:bounds-astar-real-imag}
Let $b$ be a polynomial such that $\norm{b}{L^{\infty}(\TT)} > 1$, $|b(z)| \ne 1$ a.e. on $\TT$, and let $a^\ast$ be computed as in the \textsc{MWeiss} algorithm. Then $a^\ast$ is holomorphic with no zeros in $\DD$, and its real and imaginary parts satisfy 
\vspace*{-0.2cm}
\begin{equation}
\label{eq:astar-formulas}
    \Re(a^\ast(0)) = K \cos \left( \frac{\pi |\TT_{+}|}{2} \right), \quad \Im(a^\ast(0)) = K \sin \left(\frac{\pi |\TT_{+}|}{2} \right),
\vspace*{-0.2cm}
\end{equation}
where $K = e^{\frac{1}{4 \pi} \int_{0}^{2 \pi} \log \left| 1 - |b(z)|^2 \right| \; d\theta} > 0$, $\TT_{+}$ is as defined above, and $|\TT_{+}|$ denotes the Haar measure of $\TT_{+}$ (the measure is normalized so that $\TT$ has measure $1$). Moreover, $a^\ast \in L^p(\TT)$, for every $1 \le p \le \infty$.
\end{theorem}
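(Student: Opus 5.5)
The proof rests on the Fourier series representation \cref{eq:HR-G-FS} of $G$, together with a direct computation of the zeroth Fourier coefficient $\hat{R}_0$. Since $G$ has no negative modes, I define $G$ on $\DD$ by the power series
\[
G(z) = \hat{R}_0 + 2\sum_{k\ge1}\hat{R}_k z^k .
\]
This series converges in $\DD$ because $R \in L^1(\TT)$ forces $|\hat{R}_k| \le \norm{R}{L^1(\TT)}$, so the coefficients are bounded. Hence $G$ is holomorphic on $\DD$. Then $a^\ast = e^{G}$ is holomorphic and nowhere zero on $\DD$, since the exponential never vanishes. This disposes of the first assertion. Evaluating at the origin gives $a^\ast(0) = e^{G(0)} = e^{\hat{R}_0}$, so the formulas \cref{eq:astar-formulas} reduce to computing $\hat{R}_0$.

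Using the piecewise definition \cref{eq:R-def} with normalized Haar measure, I compute
\[
\hat{R}_0 = \frac{1}{2\pi}\int_0^{2\pi} R(e^{i\theta})\,d\theta = \frac{1}{4\pi}\int_0^{2\pi}\log\bigl|1-|b(e^{i\theta})|^2\bigr|\,d\theta + \frac{i\pi}{2}\,|\TT_{+}| .
\]
The set $\TT_0$ is finite, so it has measure zero and does not contribute. The first integral is finite because $\log|1-|b|^2|$ has only logarithmic singularities at the finitely many points of $\TT_0$; this is the $L^p$ claim already noted after \cref{eq:R-def}. Exponentiating gives $a^\ast(0) = K e^{i\pi|\TT_+|/2}$ with $K>0$, and taking real and imaginary parts yields \cref{eq:astar-formulas}.

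The claim $a^\ast \in L^p(\TT)$ for every $p$, including $p=\infty$, is the delicate step. The boundary values satisfy $|a^\ast| = e^{\Re G}$. A priori $\Re G = \Re R - \Im\mathcal{H}(R)$, so I must track the real and imaginary parts of the Hilbert transform. Write $R = R_1 + iR_2$, where $R_1 = \tfrac12\log|1-|b|^2|$ and $R_2 = \tfrac{\pi}{2}\mathbf{1}_{\TT_+}$. Since $\mathcal{H}$ maps real functions to real functions, $\Re G = R_1 - \mathcal{H}(R_2)$. Two facts then control $|a^\ast|$. First, $e^{R_1} = |1-|b|^2|^{1/2}$ is bounded on $\TT$ because $b$ is a polynomial. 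Second, $R_2$ is a bounded step function whose jumps occur only at the finitely many points of $\TT_0$ that separate $\TT_+$ from $\TT_-$.

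The conjugate function of such a step function is explicitly a finite sum of terms of the form $c_j\log|\sin((\theta-\theta_j)/2)|$, with $c_j = \pm\tfrac{1}{2}$ times the jump size $\tfrac{\pi}{2}$, up to the $\tfrac{1}{\pi}$ normalization. Consequently $e^{-\mathcal{H}(R_2)}$ behaves like $|\theta-\theta_j|^{\pm 1/4}$ near each jump, with the exponent's magnitude set by these constants and to be confirmed by direct computation. Meanwhile, near each such $\theta_j$ the function $1-|b|^2$ vanishes to some integer order $m_j\ge1$, so $e^{R_1}\sim|\theta-\theta_j|^{m_j/2}$. The product is therefore $O(|\theta-\theta_j|^{m_j/2-1/4})$, which is bounded because $m_j/2 \ge 1/2 > 1/4$. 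Away from the jump points everything is bounded. This gives $a^\ast \in L^\infty(\TT)$, and $L^p$ for all finite $p$ follows on the finite measure space $\TT$.

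The main obstacle is getting the exponent of the conjugate function of the indicator exactly right, and confirming that it is dominated by the vanishing order of $1-|b|^2$. If that bookkeeping is messy, a fallback for finite $p$ is available: $e^{\pm\mathcal{H}(R_2)}\in L^q$ for all $q$ by Zygmund's exponential integrability of conjugates of bounded functions with small norm. This fallback gives only the finite-$p$ cases, so the explicit local computation is still needed for $L^\infty$.
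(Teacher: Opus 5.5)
Your route is the paper's. You compute $a^\ast(0)=e^{\hat R_0}$ the same way, and for the $L^\infty$ claim you use $|a^\ast|=e^{R_1}e^{-\mathcal{H}(R_2)}$, the explicit log-sine formula for the Hilbert transform of $\mathbf{1}_{\TT_+}$, and cancellation of endpoint singularities against zeros of $1-|b|^2$. The paper does the last step by cancelling common sine factors and dividing $h(z)=z^s(1-bb^\ast)$ by $z-e^{i\alpha_j}$; this gives continuity of $|a^\ast|$, slightly more than boundedness. Your holomorphy and $\hat R_0$ parts are fine.

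\textbf{The exponent.} The step you flagged as the crux is off by a factor of two. The antiderivative of $\cot((\zeta-\theta)/2)$ in $\theta$ is $-2\log|\sin((\zeta-\theta)/2)|$, so the $\frac{1}{2\pi}$ becomes $\frac{1}{\pi}$:
$\mathcal{H}(\mathbf{1}_{(\alpha,\beta)})(e^{i\zeta})=\frac{1}{\pi}\log\left|\frac{\sin((\zeta-\alpha)/2)}{\sin((\zeta-\beta)/2)}\right|$.
Multiplying by the jump $\frac{\pi}{2}$, the coefficients in $\mathcal{H}(R_2)$ are $\pm\frac12$, not $\pm\frac14$. Hence $e^{-\mathcal{H}(R_2)}\sim|\theta-\alpha_j|^{-1/2}$ at left endpoints of arcs of $\TT_+$, and $\sim|\theta-\beta_j|^{1/2}$ at right endpoints. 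Your local estimate should read $O(|\theta-\theta_j|^{(m_j-1)/2})$ at left endpoints, not $O(|\theta-\theta_j|^{m_j/2-1/4})$. It is still bounded because $m_j\ge1$, so the proof closes, but with no slack. At a simple crossing ($m_j=1$, the generic case), the simple zero of $1-|b|^2$ exactly absorbs the singularity and $|a^\ast|$ has a finite nonzero limit. That is precisely the paper's one-factor cancellation.

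\textbf{The fallback.} Drop it; it is false. Zygmund's theorem gives integrability of $e^{\lambda|\tilde f|}$ only for $\lambda<\pi/(2\norm{f}{L^\infty(\TT)})$. Here $R_2$ does not have small norm: after subtracting the constant $\frac{\pi}{4}$ its norm is $\frac{\pi}{4}$, so the threshold is $\lambda<2$. This is sharp, since $e^{\pm\mathcal{H}(R_2)}$ behaves like $|\theta-\theta_j|^{-1/2}$ near the relevant endpoints and so lies in $L^q$ only for $q<2$. Without the cancellation against the zero of $1-|b|^2$ you cannot even get $a^\ast\in L^2(\TT)$. That is the property the paper uses right after the theorem (Plancherel, decay of the Fourier coefficients). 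So the explicit local computation is needed for every $p\ge2$, not only for $p=\infty$.
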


The lemma shows that unless $|\TT_{+}| = 1$, we always have $\Re(a^\ast(0)) > 0$, and if $|\TT_{+}|$ is small, then $\Im(a^\ast(0))$ is small, scaling as $\Im(a^\ast(0) \sim K \pi |\TT_{+}|/2$. In general, for the other Fourier series coefficients of $a^\ast$, their imaginary parts are also small, if $|\TT_{+}|$ is small. Moreover, $a^\ast \in L^2(\TT)$ implies that (i) the Fourier series coefficients are square summable (by Plancherel's theorem), and in particular, they decay, and (ii) the Fourier series of $a^\ast$ converges point-wise a.e. on $\TT$. In the final step of $\textsc{MWeiss}$, we set to zero the Fourier series coefficients of the $k^{\text{th}}$ Fourier mode of $a^\ast$, for every $k > d$, and discard the imaginary parts of the remaining Fourier series coefficients --- i.e. if $a^\ast(e^{i\theta}) = \sum_{k\ge 0} \hat{a}^\ast_k e^{ik\theta}$, then the output of \textsc{MWeiss} are the Fourier series coefficients of $\tilde{a}^\ast(e^{i\theta}) = \sum_{k=0}^{d} \Re(\hat{a}^\ast_k) e^{ik\theta}$.

The pseudocode of \textsc{MWeiss} is presented in \Cref{alg:modweiss}. In the algorithm, we choose a discretization parameter $N$, whose value is chosen large enough, so that the Fourier series coefficients are computed with high accuracy, and the truncation and aliasing errors due to using a finite $N$ are small (this is theoretically justified as the quantities $R, G, a^\ast \in L^2(\TT)$). Our fast Fourier transform (FFT) and inverse fast Fourier transform (IFFT) are consistent with the \texttt{NumPy} conventions \cite{schatzman1996accuracy,harris2020array}; i.e. for $\mathbf{x} = (x_0,\dots,x_{N-1}) \in \CC^N$, we define $\text{FFT}(\mathbf{x})_k := \sum_{j=0}^{N-1} x_j e^{-\frac{i 2\pi jk}{N}}$, and $\text{IFFT}(\mathbf{x})_k := \frac{1}{N}\sum_{j=0}^{N-1} x_j e^{\frac{i 2\pi jk}{N}}$. The bold vectors in \Cref{alg:modweiss} are elements of $\CC^N$ (except $\tilde{\mathbf{a}}^\ast \in \mathbb{R}^{d+1}$), and note that $\mathbf{z}$ is never formed; instead one directly calculates $b(\mathbf{z}) \in \CC^N$ sampled at the points $\mathbf{z}$, using IFFT in Line~1. The $\frac{1}{N}$ factors in Lines~2, 4 are needed due to the \texttt{NumPy} FFT conventions. The quantities $\mathbf{R}$ and $\mathbf{G}$ are the values of $R$ and $G$ respectively, sampled at $\mathbf{z}$; while $\hat{\mathbf{R}}$ and $\hat{\mathbf{a}}^\ast$ are the Fourier series coefficients of $R$ and $a^\ast$ respectively, over the Fourier mode indices $k=-\lfloor N/2 \rfloor,\dots, N-1-\lfloor N/2 \rfloor$. In Line~5, the subscript notation $\Re(\hat{\mathbf{a}}^*)_{0:d}$ means that we are only selecting the Fourier mode coefficients with indices $k=0,\dots,d$.

\vspace*{-0.2cm}
\begin{algorithm}
\caption{\textsc{MWeiss}}
\label{alg:modweiss}
    \begin{algorithmic}[1]
    \REQUIRE Monomial coefficients of polynomial $b$ of degree $d$, Integer $N$ 
    \ENSURE Monomial coefficients of polynomial $\tilde{a}$
    \STATE $\mathbf{R} = \log \sqrt{1 - |b(\mathbf{z})|^2}$, $\mathbf{z} = \left\{ e^{i 2\pi j / N} \right\}_{j=0}^{N-1}$ \hfill \COMMENT{Evaluation of $b(\mathbf{z})$ via IFFT}
    \STATE $\hat{\mathbf{R}} = \frac{1}{N}\text{FFT}(\mathbf{R})$, with entries indexed $\hat{R}_{-\lfloor N/2 \rfloor}, \dots, \hat{R}_{N-1-\lfloor N/2 \rfloor}$
    \STATE $\mathbf{G} = \hat{R}_0 + 2 \sum_{k = 1}^{N-1-\lfloor N/2 \rfloor} \hat{R}_{k} \mathbf{z}^k$ \hfill \COMMENT{Evaluation via IFFT}
    \STATE $\hat{\mathbf{a}}^* = \frac{1}{N}\text{FFT}\left( e^{\mathbf{G}} \right)$
    \RETURN $\tilde{\mathbf{a}}^\ast=\Re(\hat{\mathbf{a}}^*)_{0:d}$
    \end{algorithmic}
\end{algorithm}
\vspace*{-0.2cm}
\vspace*{-0.2cm}
\begin{algorithm}
\caption{\textsc{Nonlinear Fourier Retraction}}\label{alg:retraction}
    \begin{algorithmic}[1]
    \REQUIRE Chebyshev coefficients $\mathbf{c}$ of degree $d$ polynomial $P \in \mathbb{R}[x]$, Integer $N$
    \ENSURE QSP phase factors $\Psi$, Chebyshev coefficients of $\mathcal{R}(P)$
    \STATE $\mathbf{b} =$ Monomial coefficients of $b \in \CC[z]$ \hfill \COMMENT{Using \Cref{appssec:coefmap} based on $P$}
    \STATE $\tilde{\mathbf{a}}^* = \textsc{MWeiss}(\mathbf{b}, N)$ \hfill \COMMENT{\Cref{alg:modweiss}}
    \STATE $\boldsymbol{\gamma} = \text{INFFT}(\tilde{\mathbf{a}}^*, \mathbf{b})$ \hfill \COMMENT{Algorithm~1 of \cite{ni2025inversenonlinearfastfourier}}
    \STATE $\Psi = \arctan (\boldsymbol{\gamma})$ \hfill \COMMENT{QSP-NLFT correspondence}
    \STATE $\left( \begin{smallmatrix} a_1(z) & b_1(z) \\ -b_1^*(z) & a_1^*(z) \end{smallmatrix} \right) = \overbrace{\boldsymbol{\gamma}}(z)$
    \STATE $\mathcal{R}(\bc) = $ Chebyshev coefficients of $\mathcal{R}(P)$ \hfill \COMMENT{$\mathcal{R}(P)(\cos\theta) := \Re(b_1(e^{2i \theta}) e^{-i d \theta})$}
    \RETURN $\Psi$, $\mathcal{R}(\bc)$
    \end{algorithmic}
\end{algorithm}
\vspace*{-0.2cm}

Assuming $b$ has real coefficients and $|\TT_{+}| < 1$, after \textsc{MWeiss}, we have two polynomials $b, \tilde{a}^\ast$ of degree $d$, both with real coefficients, and crucially $\tilde{a}^\ast(0) > 0$. Now, it is argued in \cite[Section~4.1]{ni2025inversenonlinearfastfourier} that the layer stripping algorithm is well-defined on any such pair of polynomials (this only uses the condition $\tilde{a}^\ast(0) > 0$), and the output of layer stripping is some $\bga \in \CC^{d+1}$. The extra condition of real valued  coefficients of $b$ and $\tilde{a}^\ast$ strengthens this conclusion to $\bga \in \mathbb{R}^{d+1}$, which then allows us to use the QSP-NLFT correspondence $\gamma_k = \tan \psi_k$, for all $0 \le k \le d$, to extract a set of QSP phase angles $\Psi$. This is one of the main reasons why we discarded the imaginary parts of all the Fourier series coefficients $\hat{a}_k^\ast$, and not just of $\hat{a}_0^\ast$, in the \textsc{MWeiss} algorithm. Finally, one computes the NLFT of $\bga$ to get $\overbrace{\boldsymbol{\gamma}}(z) := \left( \begin{smallmatrix} a_1(z) & b_1(z) \\ -b_1^*(z) & a_1^*(z) \end{smallmatrix} \right) \in \mathrm{SU}(2)$; thus the polynomials $b_1$ and $a_1^\ast$ satisfy $aa^* + bb^*=1$ on $\CC$. If we define $\mathcal{R}(P) \in \mathbb{R}[x]$ by $\mathcal{R}(P)(\cos \theta)=\Re(b_1(e^{2i\theta})e^{-id\theta})$, then the QSP-NLFT correspondence \cref{eq:HUdH-lemma-nonsym} says that $\norm{\mathcal{R}(P)}{X} \le 1$. We call this polynomial $\mathcal{R}(P)$ the \textit{nonlinear retraction} of $P$. Note that all the facts stated in this paragraph also hold when layer stripping is replaced by the INFFT algorithm. Combining both the \textsc{MWeiss} and the subsequent layer stripping (or INFFT) steps, we get the full algorithm called \textit{nonlinear Fourier retraction} (or simply, \textit{retraction}), and a pseudocode is presented in \Cref{alg:retraction}. The output of the algorithm are $\bga$ and the Chebyshev coefficients $\mathcal{R}(\bc)$ of the polynomial $\mathcal{R}(P)$, where $\bc$ denotes the Chebyshev coefficients of the input polynomial $P$. As compared to $P$, the nonlinear retraction $\mathcal{R}(P)$ is a feasible polynomial.

\vspace{0.1cm}
\noindent
\textit{Remarks on stability}: In Line~1 of \textsc{MWeiss}, it is possible that the computation of $\log$ becomes unstable when $1 - |b(z)|^2$ is close to zero, at some subset of the points $\mathbf{z}$. In this case, we propose rotating the grid slightly, and instead sample the function $|b(z)|$ at $\mathbf{z}' =  \left\{ e^{i (\delta \theta +2\pi j / N)} \right\}_{j=0}^{N-1}$, for some suitably chosen $\delta \theta$, so that $|b(z)| \neq 1$ at any point in $\mathbf{z}'$, and then update Line~2 accordingly for calculating the Fourier series coefficients. In practice, we did not need to perform this safeguarding step for the numerical examples that we demonstrate next in \Cref{ssec:numerical-examples} and \Cref{app:retraction}. Another potential source of numerical instability is the INFFT (or layer stripping) step in Line~3 of \Cref{alg:retraction}, since the sufficient conditions guaranteeing their numerical stability \cite{ni2025inversenonlinearfastfourier} no longer holds in our setting. In our numerical examples, we did not encounter this numerical instability either, but if this does happen, we propose using extended-precision arithmetic for the INFFT (or layer stripping) computation.

\subsection{Numerical demonstrations}
\label{ssec:numerical-examples}
We now apply \Cref{alg:retraction} to the QSP problems from \Cref{ssec:expt}. The uniform singular value amplification problem is presented in full in \Cref{subsubsec:unif_amp}. An important numerical finding for the matrix inversion problem is discussed in \Cref{subsubsubsec:mat_inv}, and extra supporting results are delegated to \Cref{appssec:retract_exp_matrix}. Similar results for the threshold projector problem are presented in \Cref{appssec:retract_exp_threshold}. For these demonstrations, $N$ will always refer to the discretization parameter used in \textsc{MWeiss}.

We compare several aspects of the nonlinear Fourier retraction algorithm in these examples, e.g., we show how the supremum norm on the fitting set $Y$ of the difference between the target function and the polynomial corresponding to the optimal solution of the LSIP \cref{eq:lsip-prob} scales with the maximum polynomial degree $d$, and then demonstrate how the results improve when nonlinear retraction is applied to the polynomial instead. In these examples, we frequently refer to a `\texttt{npts}' parameter, which controls how the sets $\hatX$ and $\hatY$ are chosen for the LSIP problem \cref{eq:lsip-prob}. Specifically, $\hatX$ consists of those Chebyshev nodes of the second kind ($\cos(k\pi/\texttt{npts}), \; 0 \leq k \leq \texttt{npts}$), that lie in $X$, and then define $\hatY := \hatX \cap Y \subseteq \hatX$. In all instances, unless otherwise stated, the resulting LSIP was solved using \texttt{cvxpy}, with the solver choice being \texttt{SCS} (or \texttt{CLARABEL} as a replacement if \texttt{SCS} fails to converge, determined on a case-by-case basis). We also demonstrate the performance of the method as \texttt{npts} increases.

\subsubsection{Performance for uniform singular value amplification}
\label{subsubsec:unif_amp}
Recall that for this example with the target function $g_{\strut \text{SV}}$, the Remez- and active-set-based method was numerically unstable. In the first illustration (\Cref{fig:retraction_ex}), we use $d=101$, and $\texttt{npts}=128$. The optimal solution to \cref{eq:lsip-prob} and the corresponding polynomial are denoted $\hbc^\ast$ and $\inner{\hbc^\ast}$ respectively. In the left panel, we plot $g_{\strut \text{SV}}$, $\inner{\hbc^\ast}$ and its nonlinear retraction $\mathcal{R}(\inner{\hbc^\ast})$. In the right panel, we plot over $Y$, the functions $|\inner{\hbc^\ast}(x) - g_{\strut \text{SV}}(x)|$, $|\mathcal{R}(\inner{\hbc^\ast})(x) - g_{\strut \text{SV}}(x)|$, and $|\texttt{Scaled}(\inner{\hbc^\ast})(x) - g_{\strut \text{SV}}(x)|$, where $\texttt{Scaled}(\inner{\hbc^\ast})$ is the rescaling of the polynomial $\inner{\hbc^\ast}$ by the maximum constraint violation on $X$ to enforce feasibility. We note that the retracted polynomial demonstrates smaller maximum absolute error with respect to the target function than the rescaling procedure, over the fitting set $Y$. This rescaling procedure is currently the standard approach to handle constraint violations in the fully-coherent regime \cite{Gily_n_2019}; alternatively, for the uniform singular value amplification problem, \cite{Martyn2021} suggests to multiply the infeasible polynomial by a window function, but this results in an additive factor to the overall degree. Our results show that retraction is a better alternative for this problem.
\begin{figure}[htp!]
\vspace*{-5pt}
    \centering
    \includegraphics[width=\linewidth,height=0.25\textheight]{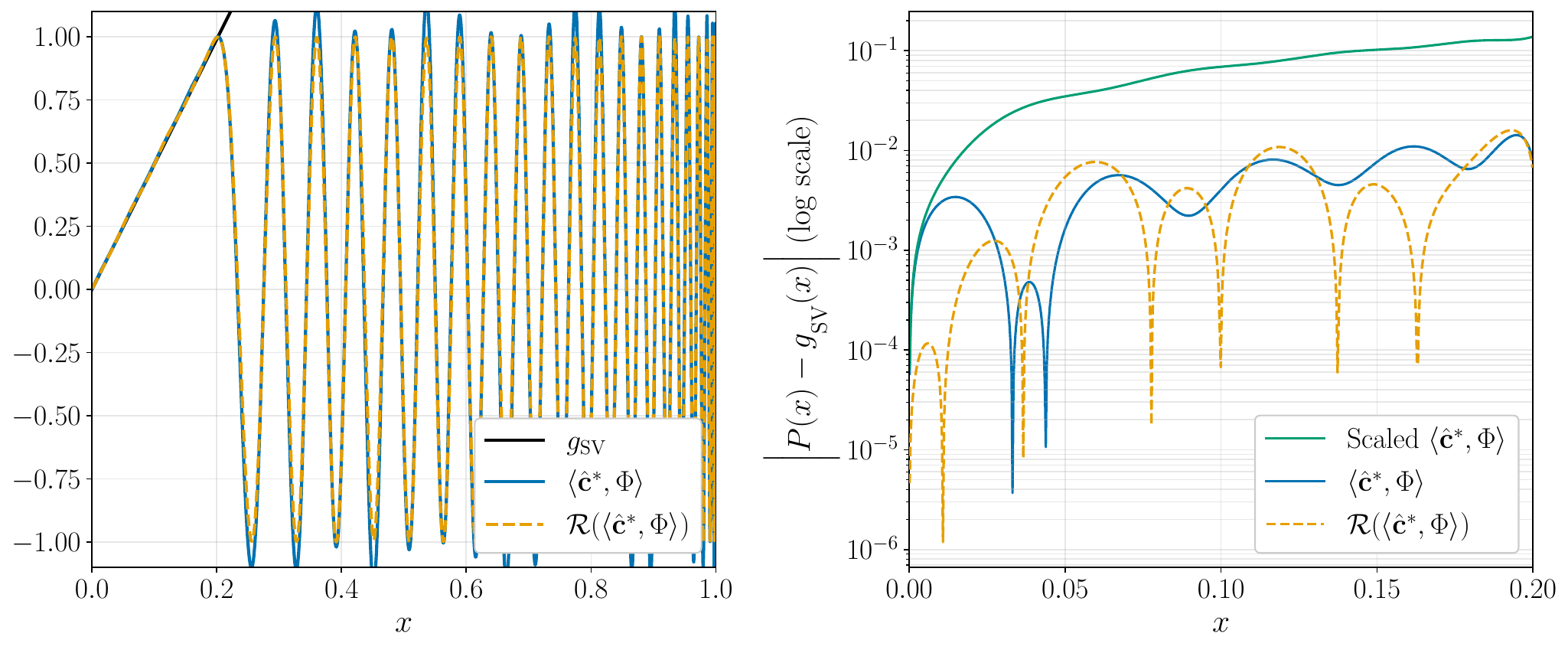}
    \vspace{-24pt}
    \caption{Illustration of nonlinear Fourier retraction for the uniform singular value amplification example (target function $g_{\strut \text{SV}}$). The polynomial $\inner{\hbc^\ast}$ is the solution to LSIP \cref{eq:lsip-prob} with polynomial degree $d=101$, and $\texttt{npts} = 128$. The (left) panel plots $g_{\strut \text{SV}}$ in black, $\inner{\hbc^\ast}$ in blue, and the retraction result $\mathcal{R}(\inner{\hbc^\ast})$ in dashed maize. In the (right) panel, we plot the absolute difference $|P(x) - g_{\strut \text{SV}}(x)|$ over $Y$, for the cases $P$ is $\inner{\hbc^\ast}$, the retraction result $\mathcal{R}(\inner{\hbc^\ast})$, and in green the polynomial obtained by scaling $\inner{\hbc^\ast}$, as mentioned in the text.}
    \label{fig:retraction_ex}
    \vspace{-18pt}
\end{figure}

In the left panel of \Cref{fig:polyspace_conv_unif_amp}, we demonstrate the performance of retraction as we increase \texttt{npts} for the LSIP \cref{eq:lsip-prob}, and use $d=101$ throughout. The accuracy of the approximating polynomials, either of the optimal solution of \cref{eq:lsip-prob} or its nonlinear retraction, is measured using the $2$-norm of the difference of the Chebyshev coefficients of these polynomials and $\hbc^\ast_{\text{gr}}$. The Chebyshev coefficients $\hbc^\ast_{\text{gr}}$ correspond to the optimal solution of \cref{eq:lsip-prob}, computed by using a very high value of \texttt{npts}$=2^{19}$, and acts as an accurate approximation of the optimal solution of \cref{eq:intro-opt-prob}, even though this polynomial exhibits a constraint violation on the order of $10^{-8}$ (as already demonstrated in \cref{fig:constraintviol}). We note from the plot that $\mathcal{R}(\inner{\hbc^\ast})$ achieves a lower error  for values of \texttt{npts} from $2^7$ to $2^{17}$, as compared to $\inner{\hbc^\ast}$. We also indicate on this plot whether the polynomials are feasible or infeasible. The right panel of \Cref{fig:polyspace_conv_unif_amp} shows the same results for the target function  $(1-\epsilon) g_{\strut \text{SV}}$, with $\epsilon=10^{-4}$, along with an additional important distinction: when solving \cref{eq:lsip-prob}, the second set of constraints are tightened to $-(1 - \epsilon) \le \langle \bc , \Phi(x) \rangle \le 1 - \epsilon$, for all $x \in X$. This represents a slightly relaxed setting as compared to the fully-coherent regime of left panel. Unlike in the left panel, we see that $\inner{\hbc^\ast}$ is a feasible polynomial for $\texttt{npts} \ge 2^{13}$, and thus we have $\mathcal{R}(\inner{\hbc^\ast})=\inner{\hbc^\ast}$ on this range. We also see that $\mathcal{R}(\inner{\hbc^\ast})$ achieves an equal or lower error than $\inner{\hbc^\ast}$, for all values of \texttt{npts}. For the left and right panels in this demonstration, we used $N=2^{18}$ and $N=2^{16}$ respectively.
\begin{figure}[htp!]
    \centering
    \includegraphics[width=0.8\linewidth]{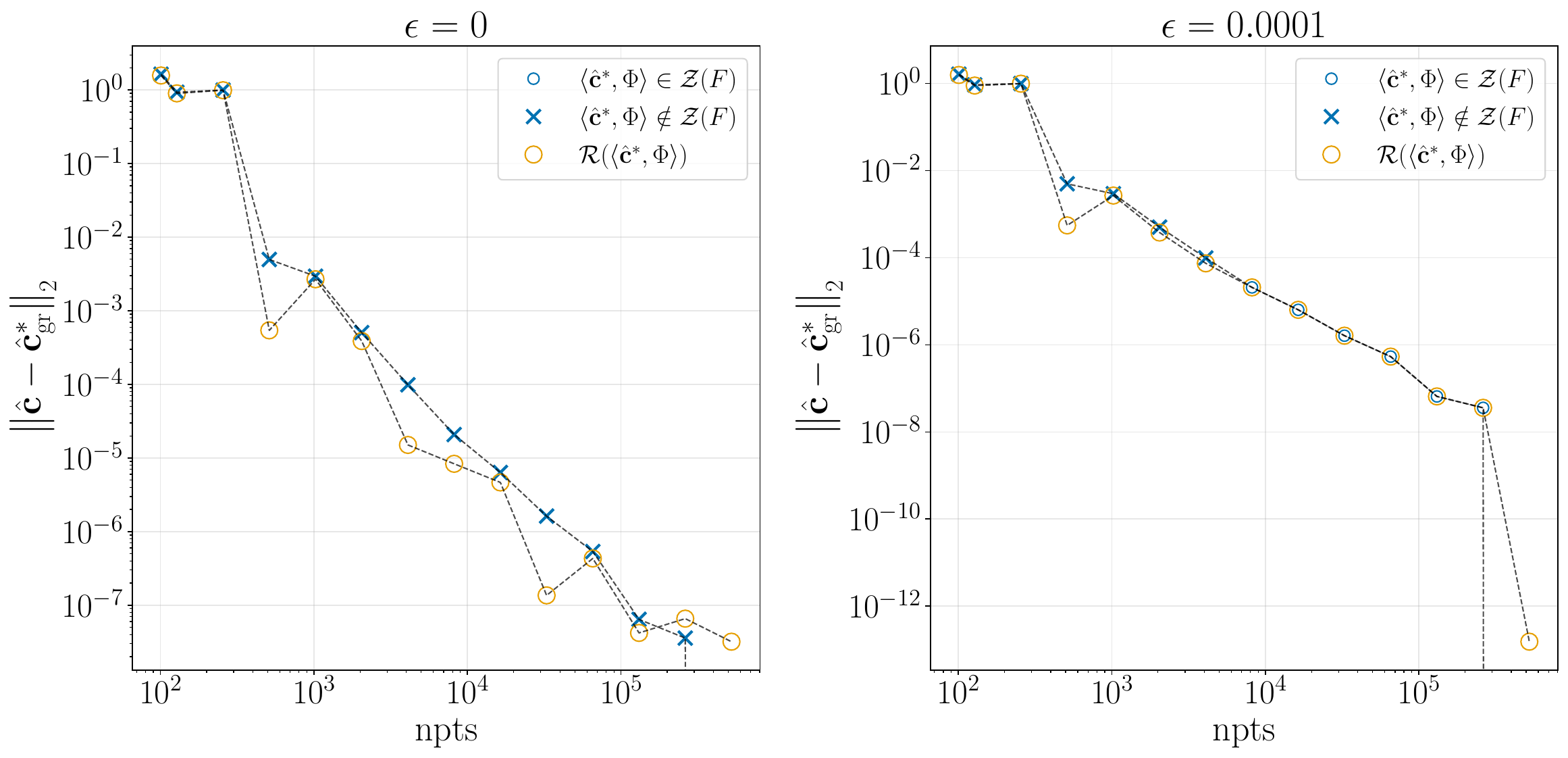}
    \vspace*{-10pt}
    \caption{Convergence of the retraction method for the target function $(1-\epsilon) g_{\strut \text{SV}}$ for different problem sizes of LSIP \cref{eq:lsip-prob} given by \texttt{npts} and $d=101$, for the two cases $\epsilon=0$ (left), which corresponds to the fully-coherent regime, and $\epsilon=10^{-4}$ (right). In both illustrations, $\hbc$ corresponds to the polynomial coefficients in the Chebyshev basis for the optimal polynomial $\inner{\hbc^\ast}$ (shown in blue), in which case $\hbc = \hbc^\ast$, and its retraction $\mathcal{R}(\inner{\hbc^\ast})$ (shown in maize), and $\hbc^\ast_{\text{gr}}$ is the optimal solution to the LSIP with $\texttt{npts}=2^{19}$, which is very close to the optimal solution of \cref{eq:intro-opt-prob}. Additionally illustrated on these plots are when $\inner{\hbc^\ast}$ is a feasible polynomial (circle) and not a feasible polynomial (cross).}
    \label{fig:polyspace_conv_unif_amp}
    \vspace*{-25pt}
\end{figure}
\begin{SCfigure}[][!ht]
    \includegraphics[width=0.55\linewidth]{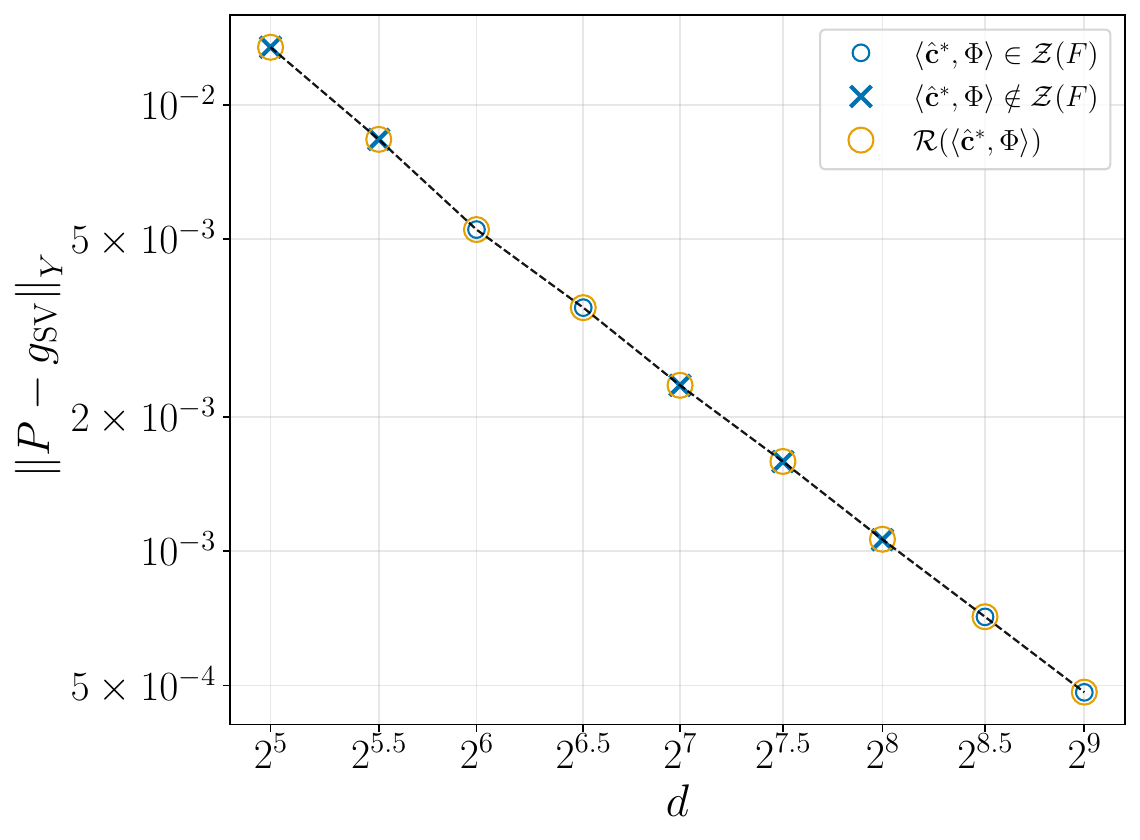}
  \caption{Plot of the approximation error $||P -  g_{\strut \text{SV}}||_Y$, for the two cases where (i) $P$ is the polynomial corresponding to the optimal solution $\hbc^\ast$ (shown in blue) to LSIP \cref{eq:lsip-prob}, and (ii) $P$ is the retraction $\mathcal{R}(\inner{\hbc^\ast})$ (shown in maize), versus the polynomial degree $d$. Each instance of the LSIP \cref{eq:lsip-prob} uses $\texttt{npts}=2^{19}$. Additionally shown are when $\inner{\hbc^\ast}$ is feasible (circle) and not feasible (cross).}
\label{fig:degree_scaling_amp}
\vspace*{-16pt}
\end{SCfigure}

In \Cref{fig:degree_scaling_amp}, we illustrate convergence to the target function $g_{\strut \text{SV}}$ while varying the degree $d$ (chosen to be odd integers, with fractional powers of two rounded up to the nearest odd integer), and we set \texttt{npts}$=2^{19}$. The \texttt{cvxpy} default solver \texttt{CLARABEL} was used for the numerics in this figure. We observe that as $d$ increases, the maximum point-wise error on $Y$ with respect to the target function progressively decreases, for both $\inner{\hbc^\ast}$ and $\mathcal{R}(\inner{\hbc^\ast})$, and moreover retraction does not increase the error.

\subsubsection{Performance for matrix inversion\label{subsubsubsec:mat_inv}}
Next, we apply the method to the target function $g_{\strut \text{MI}}$. We compare the retraction method with the strategy proposed in \cite{sunderhauf2025matrixinversionpolynomialsquantum} for enforcing the QSP feasibility constraints, starting from the unconstrained optimal polynomial for this problem (for odd $d$) derived in \cite{Berntson2024}. The strategy begins with the unconstrained optimal polynomial, which admits a closed-form analytical expression. If this polynomial violates the bound constraints, it is multiplied by a polynomial approximation to a windowing function, i.e., a function that is close to $0$ on $[0,1/\kappa)$ and close to $1$ on $[1/\kappa,1]$. Similar windowing strategies have previously been used in other works on QSP; see, e.g., \cite{Gily_n_2019, Martyn2021}.
\begin{figure}[!tp]
    \centering
    \includegraphics[width=\linewidth]{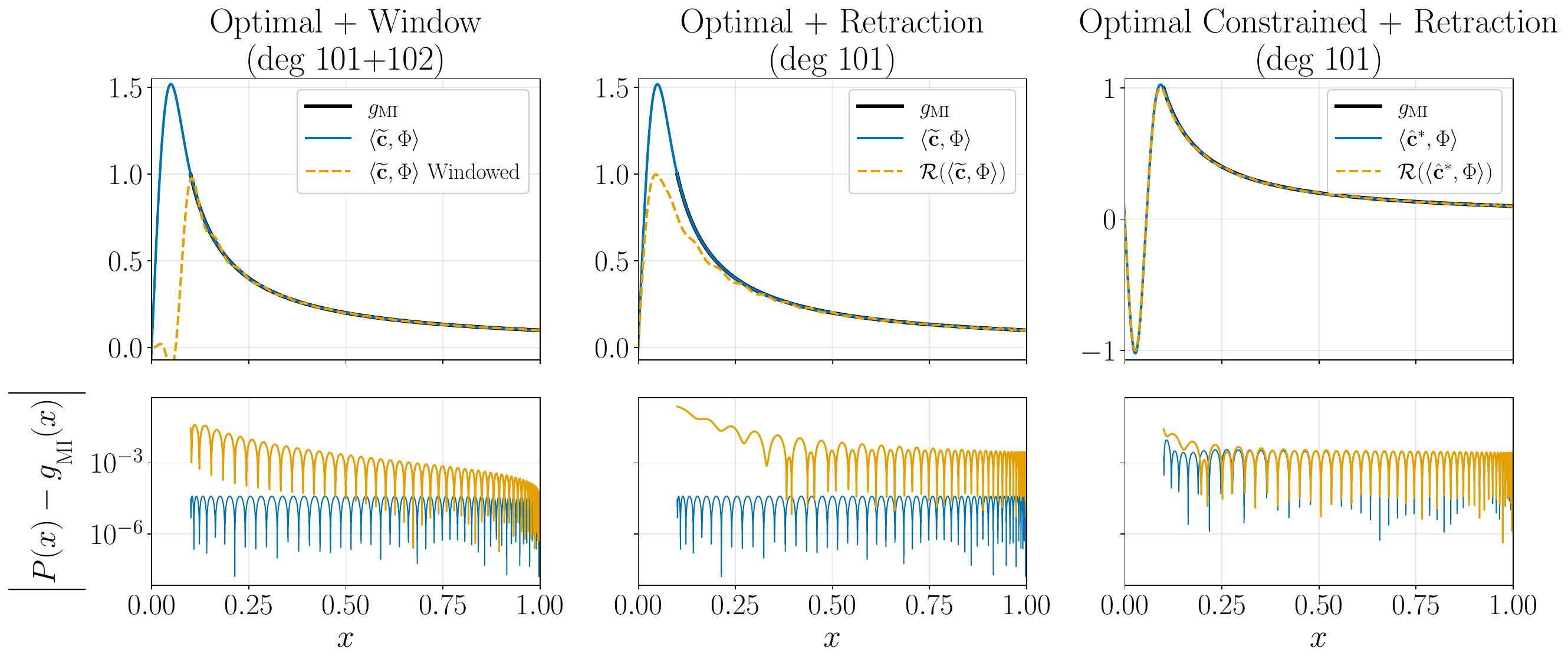}
    \vspace{-22pt}
    \caption{Comparison of three approaches for finding polynomial approximations to $g_{\strut \text{MI}}$. Left: the optimal unconstrained degree-$101$ polynomial $\inner{\widetilde{\bc}}$ from \cite{Berntson2024} (blue), with constraints enforced by multiplying it by a degree-$102$ even polynomial approximation of a windowing function (maize), as described in text. Center: the windowed polynomial of the left panel is replaced by the retraction $\mathcal{R}(\inner{\widetilde{\bc}})$ (maize) of the same optimal unconstrained polynomial. Right: a degree-$101$ polynomial $\inner{\hbc^\ast}$ (blue) obtained by solving LSIP \cref{eq:lsip-prob}, as described in the text, and its retraction $\mathcal{R}(\inner{\hbc^\ast})$ (maize). Bottom row: absolute point-wise errors relative to $g_{\strut \text{MI}}$ for the corresponding polynomials $P(x)$ in the top panel; e.g., in the bottom-center panel, $P$ denotes $\inner{\widetilde{\bc}}$ and $\mathcal{R}(\inner{\widetilde{\bc}})$.}
    \label{fig:mat_inv_ex}
    \vspace{-28pt}
\end{figure}

We set $d=101$, and show the comparisons in \Cref{fig:mat_inv_ex}. The left panel shows the strategy from \cite{sunderhauf2025matrixinversionpolynomialsquantum} where the degree-$101$ optimal unconstrained polynomial approximation to $g_{\strut \text{MI}}$, denoted $\inner{\widetilde{\bc}}$, is multiplied by a degree-$102$ even polynomial approximation of the windowing function, giving rise to a degree-$203$ windowed polynomial with same parity as $\inner{\widetilde{\bc}}$. Both $\inner{\widetilde{\bc}}$ and its windowed version are shown in the top-left panel. In the top-center panel, we use the same $\inner{\widetilde{\bc}}$ as in the left panel, but instead of the windowed polynomial, we plot the retraction $\mathcal{R}(\inner{\widetilde{\bc}})$ of $\inner{\widetilde{\bc}}$. In the top-right panel, we plot the polynomial $\inner{\hbc^\ast}$ obtained by solving the LSIP \cref{eq:lsip-prob} with $\texttt{npts}=128$, and its retraction $\mathcal{R}(\inner{\hbc^\ast})$. In each of the top panels, we have also displayed $g_{\strut \text{MI}}$ for reference. The bottom panels show the plots of the absolute point-wise error over $Y$, between $g_{\strut \text{MI}}$ and each of the two polynomials in the corresponding top panel (except $g_{\strut \text{MI}}$). The center panel shows that the point-wise approximation error to the target function of the nonlinear retraction polynomial (of degree $101$) is larger than that of the windowed polynomial shown in the left panel. This difference in performance is expected, since the windowed polynomial has twice the degree. In the right panel, $\mathcal{R}(\inner{\hbc^\ast})$ demonstrates lower error for $x \leq 0.4$, in comparison to the center panel, and even outperforms the degree-$203$ windowed polynomial from the left panel.
\section{Discussions}
\label{sec:discussions}

We study numerical methods for constrained polynomial approximation in QSP near the fully-coherent regime, where an optimal approximant reaches the boundary of the feasible set. Bound constraints enforced on a finite grid do not control the polynomial between grid points, so a discretized LSIP can return an infeasible candidate despite satisfying every sampled constraint. A QSP representation requires global boundedness on the full interval, so residual violations must be certified and corrected.

The classical Remez exchange method offers an efficient alternative to dense discretization. Our Remez- and active-set-based heuristic performs well for many tested instances, but its updates can become numerically unstable. The most reliable workflow in our experiments proceeds adaptively. We solve a moderately discretized LSIP, certify the global extrema of the candidate polynomial, and apply nonlinear Fourier retraction only if the candidate remains infeasible. Retraction preserves the polynomial degree and adds only modest classical post-processing.
Nonlinear Fourier analysis provides a bridge between the constrained approximation problem and QSP representations. Note that even though we introduced nonlinear Fourier retraction as a tool to obtain feasible polynomials from nearly feasible solutions to \cref{eq:intro-opt-prob}, obtained via solving \cref{eq:lsip-prob}, we could also use retraction to replace Lines~13-15 of \Cref{alg:constrained_remez_qsp}, to enforce constraint feasibility of that algorithm.

Both algorithms are implemented in the Python version of \texttt{qsppack}. Across the problems and parameter regimes tested, we find that retraction robustly enforced feasibility while largely preserving approximation accuracy. Future works include bounding the distortion introduced by retraction near the fully-coherent boundary with certified constraint violation and retraction accuracy. 

\section*{Notes}
\label{sec:notes}

Toward the end of this work, we became aware of a constrained approximation method due to Foucart and Powers \cite{foucart2017basc}. Using the Fej\'er-Riesz theorem, the bound constraints $-1 \le \langle \bc, \Phi(x) \rangle \le 1$ in the LSIP \cref{eq:intro-opt-prob} can be represented by semidefinite matrix constraints. We leave a detailed study of the performance of this alternative formulation, especially in the fully-coherent regime, for future work.

\bibliographystyle{siamplain}
\bibliography{references}

\appendix
\section{Omitted proofs and other supporting results}
\label{app:additional}

\subsection{Best approximation examples and existence proof}
\label{appssec:additional}

We provide some examples below where we have unique and non-unique best approximation polynomials. Following \cite{dunham1974families}, the set $\{\phi_1,\dots,\phi_{\tilde{d}}\}$ is a Chebyshev set on $(X,\{0\})$ (resp. $(X,\emptyset)$) when $d$ is odd (resp. even): for a closed subset $V \subseteq X$, with $C(X,V)$ denoting the subspace of $C(X)$ vanishing on $V$, a subset $\{\psi_1,\dots,\psi_{\tilde{d}}\} \subseteq C(X,V)$ is called a \textit{Chebyshev set} on $(X, V)$ if no nontrivial linear combination of them vanishes on $\tilde{d}$ distinct points on $X \setminus V$.
\begin{ex}[Unique best approximations]
\begin{enumerate}[(a)]
    \item Let $Y \subseteq (0,1]$ be any compact subset, such that $|Y| \ge \tilde{d}$, and let $g \in \calZ(F)$. Then there is a unique best approximation $\bc^\ast$ such that $\langle \bc^\ast, \Phi \rangle = g$. The reason is that, the set of Chebyshev polynomials $\{\phi_1,\dots,\phi_{\tilde{d}}\}$ forms a Chebyshev set on $(X,\emptyset)$ or $(X,\{0\})$, as mentioned earlier.
    \item Let $Y = X$, let $g(x) = \kappa x^2$, for some $\kappa > 0$, and let $d = 1$. In this case, it is easy to see that the best approximation polynomial must be of the form $c_1^\ast x$, for some $0 \le c_1^\ast \le \min\{1,\kappa\}$. We claim that $c_1^\ast$ is unique. To see this, a simple calculation shows that $\sup_{x \in Y} |c_1^\ast x - \kappa x^2| = \max\left\{\frac{(c_1^\ast)^2}{4 \kappa}, \kappa - c_1^\ast\right\}$, and the right-hand side has a unique minimizer as a function of $c_1^\ast$, over $[0, \min\{1,\kappa\}]$.
\end{enumerate}
\end{ex}
\begin{ex}[Non-unique best approximations]
\begin{enumerate}[(a)]
    \item Let $Y = \{0\}$. In this case, if $d$ is odd, then every $\bc^\ast \in F$ is a best approximation to $g$, while if $d$ is even, then the set of best approximations to $g$ is given by $\{\bc^\ast \in F: c^\ast_1 = \min \{ \max \{g(0),-1\}, 1\}\}$, and this is an affine subspace of dimension $\tilde{d} - 1$.
    \item Let $Y = [0,a]$, for some $a \le 1$, let $g(x)= C$ for all $x \in Y$, and for some constant $C \ge \frac{1}{2}$, and suppose $d$ is odd. In this case, the set of best approximations to $g$ is given by $\{\bc^\ast \in F : \innerx{\bc^\ast} \ge 0, \forall \; x \in Y\}$, which is a closed, convex set.
\end{enumerate}
\end{ex}

Next, we present the proof of \Cref{lem:existence-optimal-sol}.

\begin{proof}[Proof of \texorpdfstring{\Cref{lem:existence-optimal-sol}}{}]
Define the set $\hatZ = \hatX \cup \hatY$ if $d$ is even, and $\hatZ = (\hatX \cup \hatY) \setminus \{0\}$ if $d$ is odd. We will split the proof of existence of an optimal solution into two cases: (i) $|\hatZ| > \tilde{d}$, and (ii) $|\hatZ| \le \tilde{d}$. We recall that the LSIP~\cref{eq:lsip-prob} is equivalent to the minimization of $\norm{\inner{\bc} - g}{\hatY}$ subject to $\norm{\inner{\bc}}{\hatX} \le 1$, and we denote the feasible set of the latter as $\tilde{F}$ for this proof. Note that $\tilde{F}$ is closed.
\begin{enumerate}[(i)]
    \item For $R >0$, we define the closed ball $B_R := \{\bc \in \RR^{\tilde{d}} : \norm{\bc}{\infty} \le R\}$, and let $\partial B_R := \{\bc \in \RR^{\tilde{d}} : \norm{\bc}{\infty} = R\}$ denote its boundary. Then for any $\bc \in \partial B_1$, the polynomial $\inner{\bc}$ is not zero at all points in $\hatZ$ (since $|\hatZ| > \tilde{d}$), which implies $\mu := \inf_{\bc \in \partial B_1} \norm{\inner{\bc}}{\hatZ} > 0$, and choose $R>0$ such that $R\mu = \max\{1,2\norm{g}{\hatY}\}$. Now let $\bc' \in \RR^{\tilde{d}} \setminus B_R$. Then there exists $x \in \hatZ$ such that $|\innerx{\bc'}| > R\mu$. If $x \in \hatX$, this implies $\norm{\inner{\bc'}}{\hatX} > 1$, so that $(\bc',e) \not \in F_{(\hatX,\hatY)}$ for any $e$. If $\hatY \ni x \not \in \hatX$, then  notice that by the triangle inequality we have $\norm{\inner{\bc'}-g}{\hatY} \ge \norm{\inner{\bc'}}{\hatY} - \norm{g}{\hatY} > \norm{g}{\hatY}$. Therefore, in both cases we get
    \vspace*{-0.2cm}
    \begin{equation*}
        \inf_{\bc \in \tilde{F}} \norm{\inner{\bc} - g}{\hatY} =  \inf_{\bc \in \tilde{F} \cap B_R} \norm{\inner{\bc} - g}{\hatY},
        \vspace*{-0.3cm}
    \end{equation*}
    and since $\tilde{F} \cap B_R$ is compact, this infimum is attained for some $\hbc^\ast$. An optimal solution to the LSIP~\cref{eq:lsip-prob} is then given by $(\hbc^\ast, \norm{\inner{\hbc^\ast}-g}{\hatY})$.
    \item In this case, define a function on $x \in \hatZ$ as follows:
    \vspace*{-0.2cm}
    \begin{equation*}
        h(x) :=
        \begin{cases}
            0 & \quad \text{if } x \in \hatZ \setminus \hatY \\
            \min \{ \max \{g(x), -1\}, 1\} & \quad \text{if } x \in \hatY \cap \hatX \cap \hatZ \\
            g(x) & \quad \text{if } x \in (\hatY \cap \hatZ) \setminus \hatX.
        \end{cases}
        \vspace*{-0.3cm}
    \end{equation*}
    Then there exists $\hbc^\ast \in \RR^{\tilde{d}}$ such that $\innerx{\hbc^\ast} = h(x)$ for all $x \in \hatZ$, and $(\hbc^\ast, \norm{\inner{\hbc^\ast}-g}{\hatY})$ is an optimal solution to the LSIP \cref{eq:lsip-prob}.
\end{enumerate}
\end{proof}

\subsection{Mapping between \texorpdfstring{$P(x)$}{} and \texorpdfstring{$b(z)$}{}}
\label{appssec:coefmap}

Here we assume that we have a polynomial $P \in \RR[x]$ of degree $d$ and parity $d \bmod{2}$, which is obtained as an approximation to the target function $g \in C(Y)$. Let $P$ have the Chebyshev expansion $\sum_{j=1}^{\tilde{d}} c_j \phi_j(x)$, and suppose $\norm{P}{X} \le 1$. We want to find the Laurent polynomial $b(z)$, as explained in \Cref{ssec:qsp-nlft} and defined by $b(e^{2i\theta}) := e^{id \theta} P(\cos \theta)$, that can be subsequently used to find the QSP phase factors using the QSP-NLFT correspondence. When $d$ is even, this leads to $b(e^{2i\theta}) = e^{id\theta} \sum_{j=1}^{\tilde{d}} c_{j} T_{2j-2} (\cos \theta) = \sum_{j=1}^{\tilde{d}} \frac{c_{j}}{2} \left( e^{2i\theta(\frac{d}{2} + j-1)} + e^{2i\theta(\frac{d}{2} - j+1)} \right)$, or equivalently,
\vspace*{-0.3cm}
\begin{equation}
\label{eq:evenmap}
b(z) = \sum_{j=1}^{\tilde{d}} \frac{c_{j}}{2} \left( z^{\frac{d}{2} + j - 1} + z^{\frac{d}{2} -j + 1} \right) = \frac{1}{2}\sum_{j=0}^{\frac{d}{2}-1} c_{\frac{d}{2}+1-j} z^j + c_1 z^{\frac{d}{2}} + \frac{1}{2}\sum_{j=1}^{d/2} c_{j+1} z^{\frac{d}{2}+j}. \vspace*{-0.3cm}
\end{equation}
Similarly, for odd $d$, a calculation shows that
\vspace*{-0.3cm}
\begin{equation}
\label{eq:oddmap}
    b(z) = \frac{1}{2}\sum_{j=0}^{\frac{d-1}{2}} c_{\frac{d+1}{2}-j} z^j + \frac{1}{2}\sum_{j=0}^{\frac{d-1}{2}} c_{j+1} z^{\frac{d+1}{2}+j}.
\end{equation}

\subsection{Proof of \texorpdfstring{\Cref{thm:bounds-astar-real-imag}}{}}
\label{appssec:proof-main-theorem}
In the following proof, we use the notations from \Cref{ssec:alg-retraction} and \Cref{thm:bounds-astar-real-imag}.
\begin{proof}
We get $\hat{R}_0 = (4 \pi)^{-1} \int_{0}^{2 \pi} \log \left| 1 - |b(z)|^2 \right| d\theta + \frac{i \pi}{2} |\TT_{+}|$, by the Fourier series formula. Then the expressions for $\Re(a^\ast(0))$ and $\Im(a^\ast(0))$, in \cref{eq:astar-formulas} follow, by noting $a^\ast(0) = e^{G(0)} = e^{\hat{R}_0}$. Note that $K$ is finite as $\log \left| 1 - |b(\cdot)|^2 \right| \in L^1(\TT)$.

For the $L^p$ statement, we will prove something stronger, that $|a^\ast|$ is a continuous function on $\TT$. We note that $|a^\ast| = e^{\Re(G)}$, and a simple computation shows that $\Re(G) = \frac{1}{2}\log \left| 1-|b(z)|^2 \right| - \frac{\pi}{2}\mathcal{H}\left( \mathbf{1}_{\TT_{+}}\right)$, where $\mathbf{1}_{\TT_{+}}$ denotes the indicator function on $\TT_{+}$, and therefore, we get $|a^\ast(z)| = \sqrt{\left| 1-|b(z)|^2 \right|} e^{-\frac{\pi}{2} \mathcal{H}\left(\mathbf{1}_{\TT_{+}}\right)}$. Under the assumptions on the polynomial $b$, it is clear that $\TT_{+}$ is a finite union of disjoint open intervals of $\TT$; so let us assume that $\TT_{+} = \bigcup_{j=1}^{L} \TT_{(\alpha_j,\beta_j)}$, where $\TT_{(\alpha_j,\beta_j)} := \{e^{i\theta}: \theta \in (\alpha_j, \beta_j)\}$, for some $0 \le \alpha_j < \beta_j \le 2 \pi$. It is always possible to write this, because if there is any interval of the form $\{e^{i\theta}: 0 \le \theta < \beta_j\} \cup \{e^{i\theta}: \alpha_j < \theta \le 2\pi\}$ for some $0< \beta_j < \alpha_j < 2\pi$, then the indicator function on this set agrees a.e. with the sum of the indicator functions on $\TT_{(0,\beta_j)}$ and $\TT_{(\alpha_j, 2\pi)}$. Thus the Hilbert transform of $\mathbf{1}_{\TT_{+}}$ has the very simple closed form $\mathcal{H}\left( \mathbf{1}_{\TT_{+}} \right)(e^{i\zeta}) = \frac{1}{\pi} \sum_{j=1}^{L} \log \left| \frac{\sin((\zeta - \alpha_j)/2)}{\sin((\zeta - \beta_j)/2)} \right|$. Combining everything, we get $|a^\ast(e^{i\theta})| = \left| \left( 1-|b(e^{i\theta})|^2 \right) \prod_{j=1}^{L} \frac{\sin((\theta - \beta_j)/2)}{\sin((\theta - \alpha_j)/2)} \right|^{\frac{1}{2}}$, for $e^{i\theta} \in \TT$. 

Now, we simplify the product $\prod_{j=1}^{L} \frac{\sin((\theta - \beta_j)/2)}{\sin((\theta - \alpha_j)/2)}$ by canceling all the terms which are common to both the numerator and the denominator, and suppose the denominator in the result has the form $\digamma := \prod_{j=1}^{L'} \sin((\theta - \alpha_j)/2)$. Then for each $\alpha_j$ (note that this set of $\alpha_j$ are discrete) appearing in this product, it must be the case that $(1 - |b(e^{i\alpha_j})|^2) = 0$. Fix such an $\alpha_j$. Since $1 - |b(z)|^2 = 1 - b(z)b^\ast(z)$, for $z \in \TT$, and $1 - b(z)b^\ast(z) = h(z)/z^s$, where $h$ is a polynomial, and $s \in \mathbb{N}$ is the degree of $b$, we conclude that $h(e^{i \alpha_j})=0$. Therefore, $h(z) = (z - e^{i\alpha_j}) \tilde{h}(z)$, where $\tilde{h}$ is another polynomial, and this implies $\frac{\left( 1-|b(e^{i\theta})|^2 \right)}{\sin((\theta - \alpha_j)/2)} = \left(\frac{e^{i\theta} - e^{i\alpha_j}}{\sin((\theta - \alpha_j)/2)}\right) \tilde{h}(e^{i\theta}) = 2i e^{i(\theta + \alpha_j)/2} \tilde{h}(e^{i\theta})$, which is a continuous function on $\TT$. Restricting to a small neighborhood of $e^{i\alpha_j}$, we then conclude that $|a^\ast|$ is continuous at $e^{i\alpha_j}$. Repeating this argument for every $\alpha_j$ appearing in the product $\digamma$, gives that $|a^\ast|$ is continuous on $\TT$.
\end{proof}

\section{Details for the Remez- and active-set-based method}
\label{app:remez-active-set}

In this subsection, we summarize the numerical implementation details used in the experiments for \cref{alg:constrained_remez_qsp}. We slightly engineer \cref{alg:constrained_remez_qsp} to increase its numerical stability. In particular, we adopt two strategies: (i) target amplitude rescaling, and (ii) approximation subinterval shrinkage . For (i), we set the approximation target in \cref{alg:constrained_remez_qsp} as $(1 - \xi) g(x)$ for some small value $0< \xi < 1$. This strategy can stabilize the process and avoid the active set dominating the iterative process, especially when the target function exactly equals one on a subinterval like with the threshold projector target function. The approximation error is still evaluated using the original unscaled target function. For (ii), we set the approximation interval to a smaller subset with a separation of size $\eta$, i.e., $Y = [a, b] \mapsto Y^\prime = [a + \eta, b - \eta]$. This strategy stabilizes the process when the function may need a sharp reflection at the boundary value like with the uniform singular value amplification target function. 

In practice, we find these two tricks can enhance the numerical stability in some cases. However, for uniform singular value amplification, there does not appear to be a problem-agnostic parameter choice that consistently ensures convergence. The parameters used in the numerical experiments are:
\begin{itemize}
    \item Threshold projector: $\xi = 10^{-6}$ and $\eta = 0$.
    \item Matrix inversion: $\xi = 10^{-6}$ and $\eta = 0$.
    \item Uniform singular value amplification: $\xi = 0$ and $\eta = 1.33 \times 10^{-4}$.
\end{itemize}

\section{Nonlinear Fourier retraction for \texorpdfstring{$g_{\strut \text{MI}}$ and $g_{\strut \text{TP}}$}{}}
\label{app:retraction}
In this section, we repeat the numerical experiments analogous to \Cref{fig:polyspace_conv_unif_amp,fig:degree_scaling_amp}, for the matrix inversion and the threshold projector problems, thus showcasing the consistent reliability of nonlinear Fourier retraction across QSP use cases.

\subsection{Matrix inversion (extra results)}
\label{appssec:retract_exp_matrix}

In \Cref{fig:polyspace_conv_mat_inv}, we demonstrate the performance of retraction as \texttt{npts} is increased by repeating exactly the same process as for \Cref{fig:polyspace_conv_unif_amp}, with the only change being that we replace $g_{\strut \text{SV}}$ with $g_{\strut \text{MI}}$ throughout. We use the same degree, $d=101$, as in \Cref{fig:polyspace_conv_unif_amp}; moreover, all other quantities and variables, such as $\texttt{npts}=2^{19}$ and $\epsilon=10^{-4}$, remain unchanged. The computation of $\hbc^\ast_{\text{gr}}$ is also performed in the same way, after replacing $g_{\strut \text{SV}}$ with $g_{\strut \text{MI}}$.
\begin{figure}[!hp]
\vspace*{-12pt}
    \centering
    \includegraphics[width=0.8\linewidth]{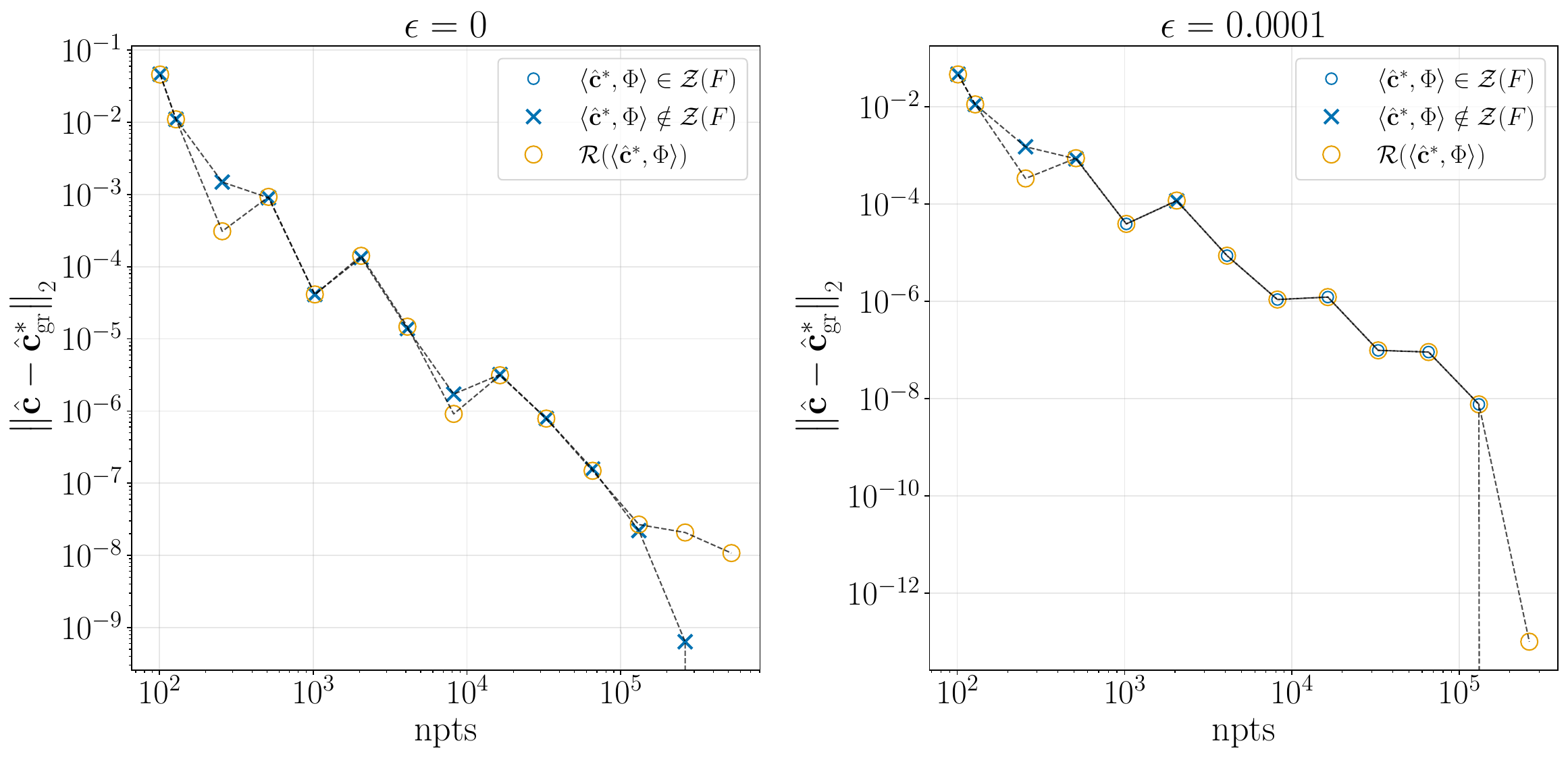}
    \vspace*{-10pt}
    \caption{Convergence of the retraction method for the target function $(1-\epsilon) g_{\strut \text{MI}}$ for different problem sizes of LSIP \cref{eq:lsip-prob} given by \texttt{npts} and $d=101$, for the two cases $\epsilon=0$ (left), and $\epsilon=10^{-4}$ (right). Similar to \Cref{fig:polyspace_conv_unif_amp}, $\hbc$ corresponds to the polynomial coefficients in the Chebyshev basis for $\inner{\hbc^\ast}$ (shown in blue), and its retraction $\mathcal{R}(\inner{\hbc^\ast})$ (shown in maize), and $\hbc^\ast_{\text{gr}}$ is the optimal solution to the LSIP with $\texttt{npts}=2^{19}$. Additionally illustrated are when $\inner{\hbc^\ast}$ is a feasible polynomial (circle) and not a feasible polynomial (cross).}
    \label{fig:polyspace_conv_mat_inv}
\vspace*{-28pt}
\end{figure}
\begin{SCfigure}[][!hp]
    \includegraphics[width=0.55\linewidth]{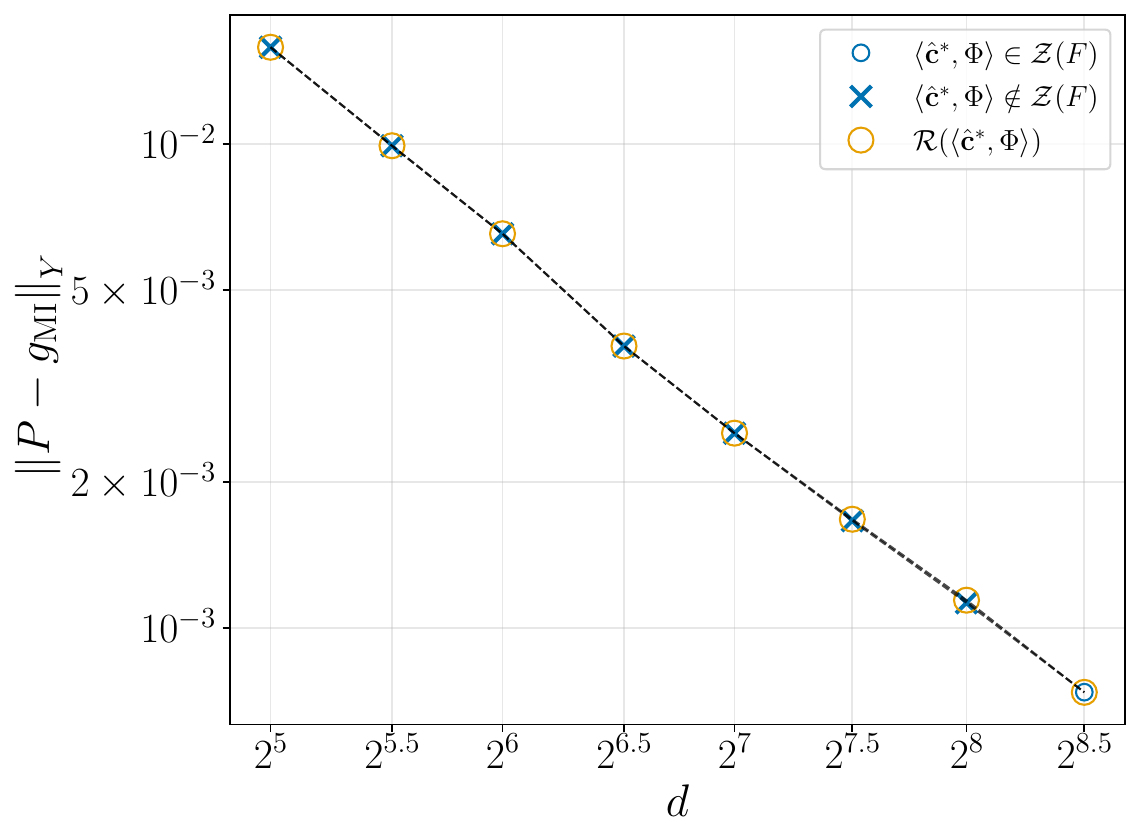}
    \caption{Plot of the approximation error $||P -  g_{\strut \text{MI}}||_Y$, for the two cases where (i) $P$ is the polynomial corresponding to the optimal solution $\hbc^\ast$ (shown in blue) to LSIP \cref{eq:lsip-prob}, and (ii) $P$ is the retraction $\mathcal{R}(\inner{\hbc^\ast})$ (shown in maize), versus the polynomial degree $d$. Each instance of the LSIP \cref{eq:lsip-prob} uses $\texttt{npts}=2^{19}$. Additionally shown are when $\inner{\hbc^\ast}$ is feasible (circle) and not feasible (cross).}
    \label{fig:degree_scaling_mat_inv}
\end{SCfigure}

In \Cref{fig:degree_scaling_mat_inv}, we illustrate the decrease in the supremum norm on $Y$ of the difference of $g_{\strut \text{MI}}$ and both the optimal solution to the LSIP \cref{eq:lsip-prob}, given by $\inner{\hbc^\ast}$, and its retraction $\mathcal{R}(\inner{\hbc^\ast})$, along with the feasibility of these polynomials. We choose $\texttt{npts}=2^{19}$, and the other details are kept the same as for \Cref{fig:degree_scaling_amp}. We see that after retraction of the polynomials, this error does not increase, and this effect is seen for $d \le 364$, as the polynomials $\inner{\hbc^\ast}$ are infeasible. For $d > 364$, retraction does not change the polynomial as $\inner{\hbc^\ast}$ is already feasible.

\subsection{Threshold projector}
\label{appssec:retract_exp_threshold}

Here we demonstrate the success of the nonlinear Fourier retraction algorithm for the threshold projector target function $g_{\strut \text{TP}}$. This target function is displayed in \Cref{fig:thresh_proj_inset}, along with the polynomial $\inner{\hbc^\ast}$ corresponding to the optimal solution to LSIP \cref{eq:lsip-prob}, with $d=128$ and $\texttt{npts}=256$, and also its nonlinear retraction $\mathcal{R}(\inner{\hbc^\ast})$. To illustrate the importance of the retraction method for this example, a zoomed-in inset portrays the numerous constraint violations of $\inner{\hbc^\ast}$ (since we are in the fully-coherent regime) in \Cref{fig:thresh_proj_inset}, and shows that the retracted polynomial remains close to the target function while rigidly enforcing the constraints.
\begin{SCfigure}[][!ht ]
    \includegraphics[width=0.55\textwidth]{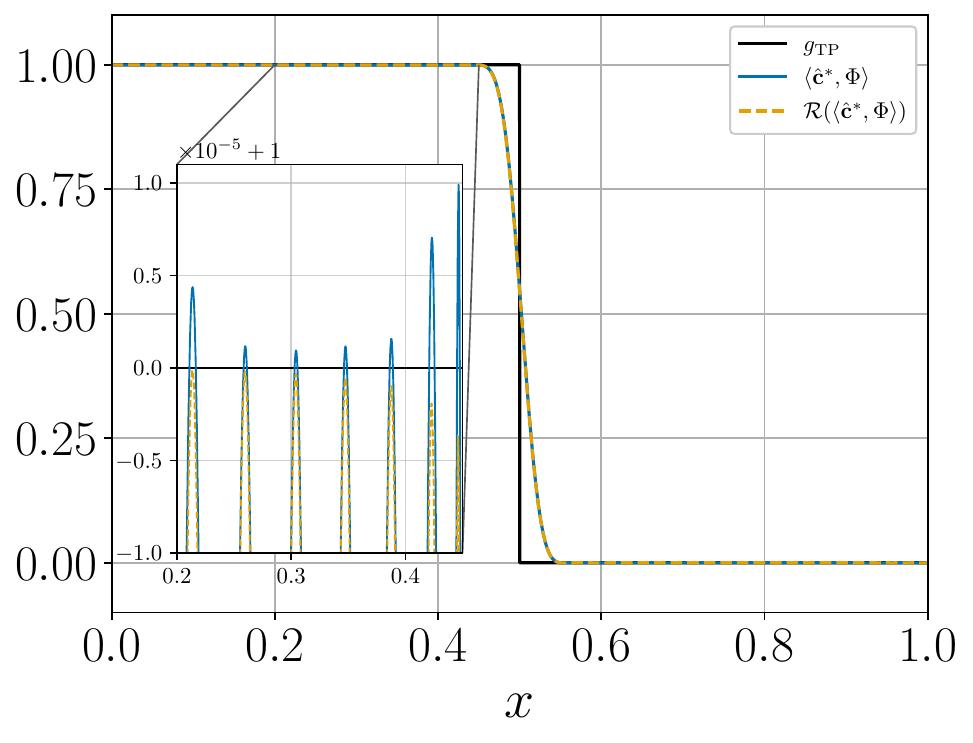}
    \caption{Example of degree-$128$ approximation of $g_{\strut \text{TP}}$, found by solving the LSIP \cref{eq:lsip-prob} with \texttt{npts}$=256$. $g_{\strut \text{TP}}$ is shown in black, the optimal polynomial solution $\inner{\hbc^\ast}$ of \cref{eq:lsip-prob} is shown in blue, and the retraction $\mathcal{R}(\inner{\hbc^\ast})$ is shown in dashed maize. The inset zooms in on the fully-coherent regime from $x=0.2$ to $x=0.45$, illustrating the constraint violations of $\inner{\hbc^\ast}$.}
    \label{fig:thresh_proj_inset}
\vspace*{-16pt}
\end{SCfigure}
\begin{figure}[!hp]
\vspace*{-10pt}
    \centering
    \includegraphics[width=0.8\linewidth]{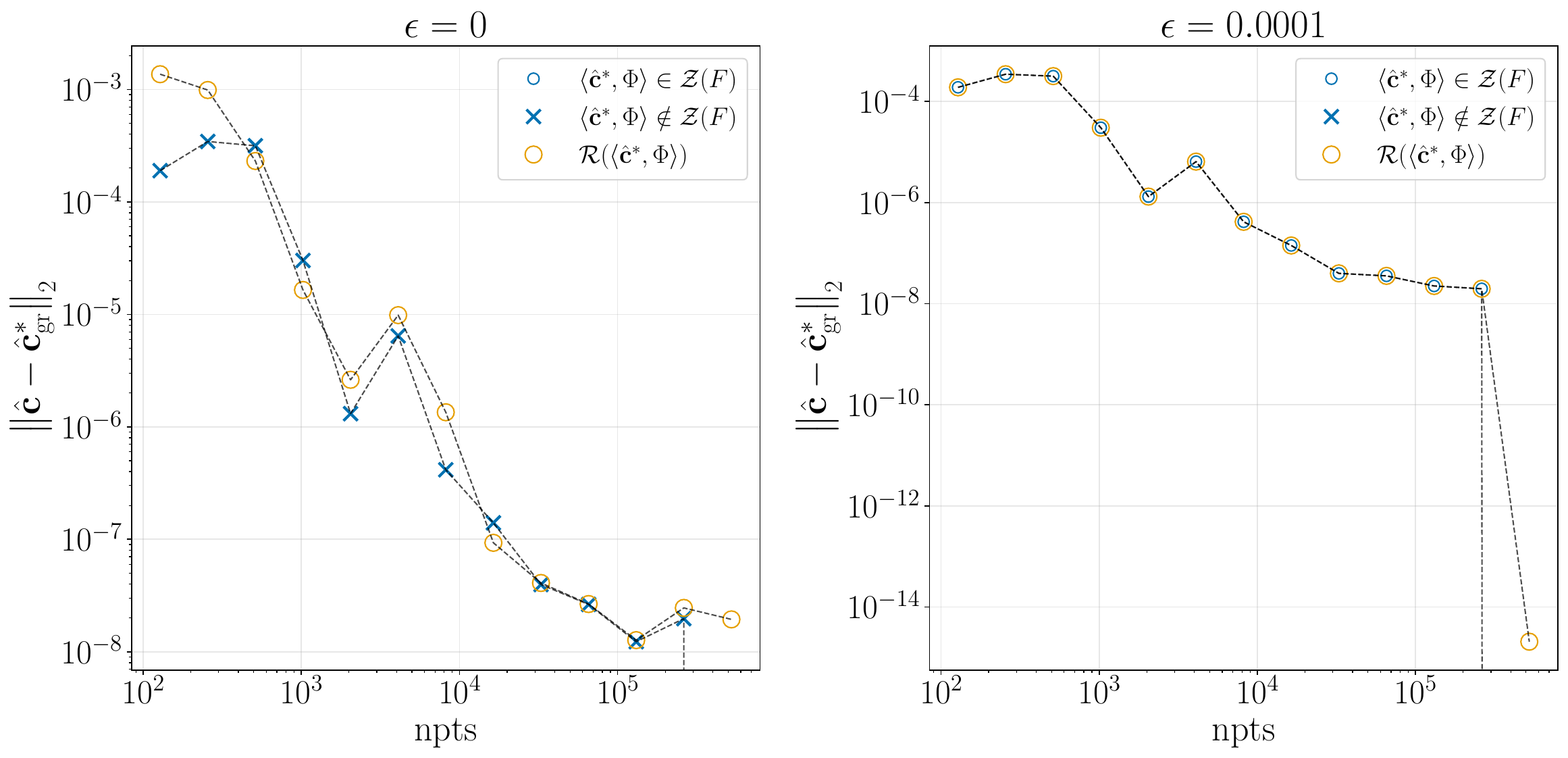}
    \vspace*{-10pt}
    \caption{Convergence of the retraction method for the target function $(1-\epsilon) g_{\strut \text{TP}}$ for different problem sizes of LSIP \cref{eq:lsip-prob} given by \texttt{npts} and $d=128$, for the two cases $\epsilon=0$ (left), and $\epsilon=10^{-4}$ (right). Similar to \Cref{fig:polyspace_conv_unif_amp}, $\hbc$ corresponds to the polynomial coefficients in the Chebyshev basis for $\inner{\hbc^\ast}$ (shown in blue), and its retraction $\mathcal{R}(\inner{\hbc^\ast})$ (shown in maize), and $\hbc^\ast_{\text{gr}}$ is the optimal solution to the LSIP with $\texttt{npts}=2^{19}$. Additionally illustrated are when $\inner{\hbc^\ast}$ is a feasible polynomial (circle) and not a feasible polynomial (cross).}
    \label{fig:polyspace_conv_thresh_proj}
    \vspace*{-16pt}
\end{figure}

The result of repeating the numerical experiment corresponding to \Cref{fig:polyspace_conv_unif_amp}, but for the target function $g_{\strut \text{TP}}$ is shown in \Cref{fig:polyspace_conv_thresh_proj}. A few other parameters that are different, as compared to \Cref{fig:polyspace_conv_unif_amp}, is that we set the degree to $d=128$, and $N=2^{15}$ in \textsc{MWeiss}. 

In \Cref{fig:degree_scaling_thresh_proj}, we show the result of repeating the numerical experiment from \Cref{fig:degree_scaling_amp}, but for the $g_{\strut \text{TP}}$ target function. For this fully-coherent regime example, we fix \texttt{npts}$=2^{19}$, and the degree parameter $d$ is varied over even integers. We see that the polynomials $\inner{\hbc^\ast}$, corresponding to the solution of the LSIP \cref{eq:lsip-prob}, are infeasible for all $d$, except $d=256$. But their nonlinear retraction $\mathcal{R}(\inner{\hbc^\ast})$ achieves almost the same error on $Y$ in the supremum norm, relative to $g_{\strut \text{TP}}$, but they are feasible polynomials. However, unlike \Cref{fig:degree_scaling_amp} and \Cref{fig:degree_scaling_mat_inv}, we observe super-polynomial scaling in the error.

\begin{SCfigure}[][!ht]
    \includegraphics[width=0.55\textwidth]{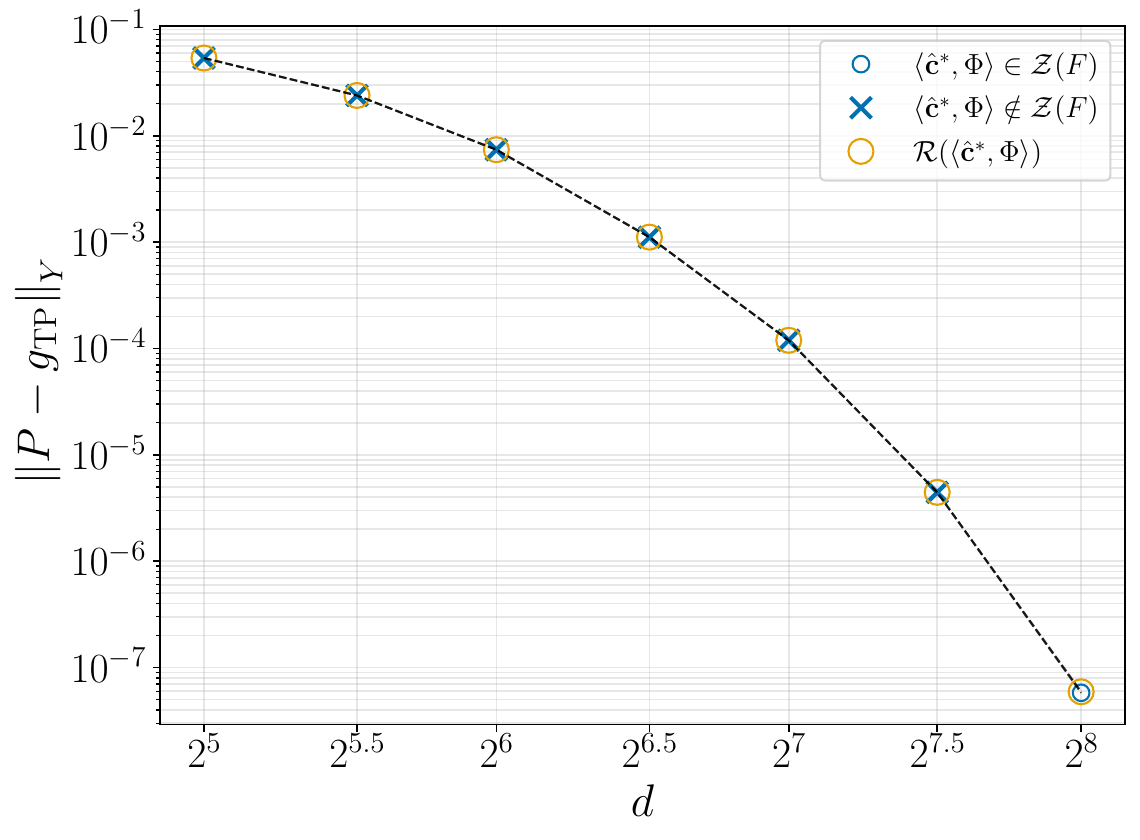}
    \caption{Plot of the approximation error $||P -  g_{\strut \text{TP}}||_Y$, for the two cases where (i) $P$ is the polynomial corresponding to the optimal solution $\hbc^\ast$ (shown in blue) to LSIP \cref{eq:lsip-prob}, and (ii) $P$ is the retraction $\mathcal{R}(\inner{\hbc^\ast})$ (shown in maize), versus the polynomial degree $d$. Each instance of the LSIP \cref{eq:lsip-prob} uses $\texttt{npts}=2^{19}$. Additionally shown are when $\inner{\hbc^\ast}$ is feasible (circle) and not feasible (cross).}
\label{fig:degree_scaling_thresh_proj}
\vspace*{-16pt}
\end{SCfigure}

\end{document}